\newcommand{\xra}{\xrightarrow}
\newcommand{\Closure}{\operatorname{\mathit{Closure}}}
\newcommand{\Approx}{\operatorname{\mathit{Approx}}}
\newcommand{\tuple}[1]{\ensuremath{(#1)}}
\newcommand{\Rpos}[0]{\ensuremath{\mathbb{R}_{\geq 0}}}
\newcommand{\CC}[0]{\ensuremath{\Phi}} 
\newcommand{\val}[0]{\ensuremath{\nu}} 
\newcommand{\vali}[0]{\ensuremath{\mathbf{0}}} 
\newcommand{\intpart}[1]{\ensuremath{\lfloor #1\rfloor}}
\newcommand{\fracpart}[1]{\ensuremath{\{ #1\}}}
\newcommand{\reset}[1]{\ensuremath{[#1]}}
\newcommand{\regequiv}[0]{\ensuremath{\sim}}
\newcommand{\clequiv}[2]{\ensuremath{[#1]_{#2}}}
\newcommand{\simu}{simulation } 
\newcommand{\sg}{SG}
\newcommand{\GZG}{GZG^a}
\newcommand{\ZG}{ZG^a}
\newcommand{\qed}[0]{}
\lstdefinelanguage{algo}{%
  morekeywords={procedure,call,push,for,all,and,or,if,then,else,repeat,pop,until,while,do,report}
}
\title{Efficient Emptiness Check for Timed Büchi Automata\\
  (Extended version)}
\author{F. Herbreteau, B Srivathsan and I. Walukiewicz\\
  {\small Univ. Bordeaux, CNRS, LaBRI, UMR 5800, F-33400 Talence,
    France} }
\begin{document}

\maketitle

\begin{abstract}
  The Büchi non-emptiness problem for timed automata refers to deciding
  if a given automaton has an infinite non-Zeno run satisfying the
  Büchi accepting condition. The standard solution to this problem
  involves adding an auxiliary clock to take care of the
  non-Zenoness. In this paper, it is shown that this simple
  transformation may sometimes result in an exponential blowup. A
  construction avoiding this blowup is proposed. It is also shown that
  in many cases, non-Zenoness can be ascertained without extra
  construction. An on-the-fly algorithm for the non-emptiness problem,
  using non-Zenoness construction only when required, is
  proposed. Experiments carried out with a prototype implementation of
  the algorithm are reported.
\end{abstract}

\section{Introduction}

Timed automata~\cite{AD:TCS:1994} are widely used to model real-time
systems. They are obtained from finite automata by adding clocks that
can be reset and whose values can be compared with constants. The
crucial property of timed automata is that their emptiness is
decidable. This model has been implemented in verification tools like
Uppaal~\cite{BDLHPYH:QEST:2006} or Kronos~\cite{BDMOTY:CAV:1998}, and
used in industrial case
studies~\cite{HSLL:RTSS:1997,BBP:IJPR:2004,JRLD:FORMATS:2007}.

While most tools concentrate on the reachability problem, questions
concerning infinite executions of timed automata are also of
interest. In the case of infinite executions one has to eliminate the
so-called Zeno runs. These are executions that contain infinitely many
steps taken in a finite time interval. For obvious reasons such
executions are considered unrealistic. One way to treat Zeno runs
would be to say that a timed automaton admitting such a run is faulty
and should be disregarded. This gives rise to the problem of detecting
the existence of Zeno runs in an
automaton~\cite{Bowman:FAC:2006,GB:FORMATS:2007,Herbreteau:CONCUR:2011}. The
other approach to handling Zeno behaviours, that we adopt here, is to
say that due to imprecisions introduced by the modeling process one
may need to work with automata having Zeno runs. This leads to the
problem of this paper: given a timed automaton decide if it has a
non-Zeno run passing through accepting states infinitely often. We
call this the \emph{Büchi non-emptiness} problem.

This basic problem \cite{AD:TCS:1994} has been studied
already in the paper introducing timed automata. It has been shown
that using so-called region abstraction the problem can be reduced to
the problem of finding a path in a finite region graph satisfying some
particular conditions. The main difference between the cases of finite
and infinite executions is that in the latter one needs to decide if
the path that has been found corresponds to a non-Zeno run of the
automaton.

Subsequent research has shown that the region abstraction is very
inefficient for reachability problems. Another method using zones
instead of regions has been proposed \cite{D:AVMFSS:1990}. It is used
at present in all timed-verification tools. While simple at the first
sight, the zone abstraction was delicate to get
right~\cite{Bouyer:STACS:2003}. This is mainly because the basic
properties of regions do not transfer to zones. The zone abstraction
also works for infinite executions, but unlike for regions, it is
impossible to decide if a path in a zone graph corresponds to a
non-Zeno run of the automaton.

There exists a simple solution to the problem of Zeno runs that
amounts to transforming automata in such way that every run passing
through an accepting state infinitely often is non-Zeno. An automaton
with such a property is called \emph{strongly non-Zeno}. The
transformation is easy to describe and requires the addition of one
new clock. This paper is motivated by our experiments with an
implementation of this construction. We have observed that this
apparently simple transformation can give a big overhead in the size
of a zone graph.

In this paper we closely examine the transformation to strongly
non-Zeno automata~\cite{TYB:FMSD:2005}, and show that it can inflict a
blowup of the zone graph; and this blowup could even be exponential in
the number of clocks. To substantiate, we exhibit an example of an
automaton having a zone graph of polynomial size, whose transformed
version has a zone graph of exponential size. We propose another
solution to avoid this phenomenon. Instead of modifying the automaton,
we modify the zone graph. We show that this modification allows us to
detect if a path in the zone graph can be instantiated to a non-Zeno
run. Moreover the size of the modified graph is $|ZG(\Aa)|\cdot\Oo(|X|)$,
where $|ZG(\Aa)|$ is the size of the zone graph and $|X|$ is the
number of clocks.

In the second part of the paper we propose an on-the-fly algorithm for
testing the existence of accepting non-Zeno runs in timed B\"{u}chi
automata.  The problem we face highly resembles the emptiness testing
of finite automata with generalized Büchi conditions. Since the most
efficient solutions for the latter problem are based on Tarjan's
algorithm to detect strongly-connected-components (SCCs)
\cite{SE:TACAS:2005,GS:MEMICS:2009}, we take the same route here. We
additionally observe that Büchi emptiness can sometimes be decided
directly from the zone graph. This permits to restrict the use of the
modified zone graph construction only to certain parts of the zone
graph. In cases when no clock comparisons of the form $x=0$ are
reachable from the initial state of the automaton, the algorithm runs
in time $\Oo(|ZG(\Aa)|\cdot|X|)$. Further, the optimized algorithm runs in
time $\Oo(|ZG(\Aa)|)$ when no reachable SCC contains a blocking clock:
that is, a clock that is bounded (e.g. $x\leq 1$) but never reset in
the SCC. We also give additional optimizations that prove to be
powerful in practice. We include experiments conducted on examples in
the literature.

\subsection{Related work}
The zone approach has been introduced in the Petri net
context~\cite{BM:IFIP:1983}, and then adapted to the framework of
timed automata~\cite{D:AVMFSS:1990}. The advantage of zones over
regions is that they do not require to consider every possible unit
time interval separately. The delicate point about zones was to find a
right approximation operator. Usual approximation operators are sound
and complete: each path in the zone graph can be instantiated as a run
in the automaton and vice-versa. While this is enough for correctness
of the reachability algorithm, it does not allow however to determine
if a path can be instantiated to a non-Zeno run. The solution
involving adding one clock has been discussed
in~\cite{T:ARTS:1999,TYB:FMSD:2005,AM:SFM-RT:2004}. Recently,
Tripakis~\cite{T:TOCL:2009} has shown a way to extract an accepting
run from a zone graph of the automaton. Combined with the construction
of adding one clock this gives a solution to the B\"{u}chi emptiness
problem. Since, as we show here, adding one clock may be costly, this
solution is costly too. A different approach has been considered
in~\cite{Bowman:FAC:2006,GB:FORMATS:2007} where some sufficient
conditions are proposed for a timed automaton to be free from Zeno
runs. Notice that for obvious complexity reasons, any such condition
must be either not complete, or of the same algorithmic complexity as
the emptiness test itself.

\subsection{Organization of the paper}
In the next section we formalize our problem, and discuss region and
zone abstractions. As an intermediate step we give a short proof of
the above mentioned result
from~\cite{T:TOCL:2009}. Section~\ref{sec:nonzeno} explains the
problems with the transformation to strongly non-Zeno automata, and
describes our alternative method. The following section is devoted to
a description of the algorithm. We conclude with the results of the
experiments performed.



\section{The Emptiness Problem for Timed Büchi Automata}
\label{sec:empty_tba}

\subsection{Timed Büchi Automata}
\label{sec:empty_tba:tba}

Let $X$ be a set of clocks, i.e., variables that range over $\Rpos$,
the set of non-negative real numbers. \emph{Clock constraints} are
conjunctions of comparisons of variables with integer constants: $x\#
c$ where $x\in X$ is a clock, $c\in\Nat$ and
$\#\in\{<,\leq,=,\geq,>\}$. For instance $(x\leq 3\land y>0)$ is a
clock constraint. Let $\CC(X)$ denote the set of clock constraints
over clock variables $X$.

A \emph{clock valuation} over $X$ is a function
$\val\,:\,X\rightarrow\Rpos$. We denote $\Rpos^X$ for the set of clock
valuations over $X$, and $\vali\,:\,X\rightarrow\{0\}$ for the valuation
that associates $0$ to every clock in $X$. We write $\val\models \phi$
when $\val$ satisfies $\phi$, i.e.  when every constraint in $\phi$
holds after replacing every $x$ by $\val(x)$.

For a valuation $\val$ and  $\delta\in\Rpos$, let $(\val+\delta)$
be the valuation such that $(\val+\delta)(x)=\val(x)+\delta$ for all
$x\in X$. For a set $R\subseteq X$, let 
$\reset{R}\val$ be the valuation such that $(\reset{R}\val)(x)=0$
if $x\in R$ and $(\reset{R}\val)(x)=\val(x)$ otherwise.

A \emph{Timed Büchi Automaton (TBA)} is a tuple
$\mathcal{A}=\tuple{Q,q_0,X,T,\Acc}$ where $Q$ is a finite set of
states, $q_0\in Q$ is the initial state, $X$ is a finite set of
clocks, $\Acc\subseteq Q$ is a set of accepting states, and
$T\,\subseteq\, Q\times\CC(X)\times 2^X \times Q$ is a finite set of
transitions $\tuple{q,g,R,q'}$ where $g$ is a \emph{guard}, and $R$ is
a \emph{reset} of the transition.

A \emph{configuration} of $\Aa$ is a pair $\tuple{q,\val}\in
Q\times\Rpos^X$; with $\tuple{q_0,\vali}$ being the \emph{initial
  configuration}. A \emph{transition} $\tuple{q,\val}\xra{\d,t}
\tuple{q',\val'}$ for $t=(q,g,R,q')\in T$ and $\d\in\Rpos$ is defined
when $\val+\d\sat g$ and $\val'=\reset{R}(\val+\d)$.

A \emph{run} of $\mathcal{A}$ is an infinite sequence of
configurations connected by transitions, starting from the initial
state $q_0$ and the initial valuation $\val_0=\vali$:
\begin{equation*}
  \tuple{q_0,\val_0}\xrightarrow{\delta_0,t_0}
  \tuple{q_1,\val_1}\xrightarrow{\delta_1,t_1}
  \cdots
\end{equation*}
A run $\sigma$ \emph{satisfies the Büchi condition} if it visits
\emph{accepting configurations} infinitely often, that is
configurations with a state from $\Acc$.  The \emph{duration} of the
run is the accumulated delay: $\sum_{i\geq 0} \delta_i$.  An infinite run
$\sigma$ is \emph{Zeno} if it has a finite duration.

\begin{definition}
  \label{defn:language_tba}
  The \emph{Büchi non-emptiness problem} is to decide if $\Aa$ has a
  non-Zeno run satisfying the Büchi condition.
\end{definition}

The Büchi non-emptiness problem is known to be
\PSPACE-complete~\cite{AD:TCS:1994}.

The class of TBA we consider is usually known as diagonal-free TBA
since clock comparisons like $x-y\leq 1$ are disallowed.  Since we are
interested in the Büchi non-emptiness problem, we can consider
automata without an input alphabet and without invariants since they
can be simulated by guards.

\subsection{Regions and region graphs}
\label{sec:empty_tba:regions}

A simple decision procedure for the Büchi non-emptiness problem builds
from $\Aa$ a graph called the \emph{region graph} and tests if there is a
path in this graph satisfying certain conditions. We will define two
types of regions.

Fix a constant $M$ and a finite set of clocks $X$. Two valuations
$\nu,\nu'\in \Rpos^X$ are \emph{region equivalent} w.r.t. $M$, denoted
$\val\regequiv_{M} \val'$ iff for every $x,y\in X$:
\begin{enumerate}
\item $\nu(x)>M$ iff $\nu'(x)>M$;
\item if $\nu(x)\leq M$, then $\intpart{\val(x)}=\intpart{\val'(x)}$;
\item if $\nu(x)\leq M$, then $\fracpart{\val(x)}=0$ iff
  $\fracpart{\val'(x)}=0$;
\item if $\nu(x)\leq M$ and $\nu(y)\leq M$ then
  $\fracpart{\val(x)}\leq \fracpart{\val(y)}$ iff
$\fracpart{\val'(x)}\leq \fracpart{\val'(y)}$. 
\end{enumerate}

The first three conditions ensure that the two valuations satisfy the
same guards as clock constraints are defined with respect to integer
bounds and $M$ is the maximal constant in $\Aa$. The last one enforces
that for every $\d\in\Rpos$ there is $\d'\in\Rpos$, such that
valuations $\val+\d$ and $\val'+\d'$ satisfy the same guards since the
difference of $x$ and $y$ is invariant by time elapse.

We will also define \emph{diagonal region equivalence (d-region
  equivalence} for short) that strengthens the last condition to
\begin{description}
\item [$4^d$.] for every integer $c\in (-M,M)$: $\val(x)-\val(y)\leq
  c$ iff $\val'(x)-\val'(y)\leq c$
\end{description}
This region equivalence is denoted by $\regequiv^d_{M}$. Observe that it
 is finer than $\regequiv_M$.

A \emph{region} is an equivalence class of $\regequiv_{M}$. We write
$\clequiv{\val}{\regequiv_{M}}$ for the region of $\val$, and $\Rr_M$
for the set of all regions with respect to $M$.  Similarly, for
d-region equivalence we write: $\clequiv{\val}{\regequiv_{M}}^d$ and
$\Rr_M^d$.  If $r$ is a region or a d-region then we will write $r\sat
g$ to mean that every valuation in $r$ satisfies the guard
$g$. Observe that all valuations in a region, or a d-region, satisfy the
same guards.

For an automaton $\Aa$, we define its \emph{region graph}, $RG(\Aa)$,
using the $\regequiv_M$ relation, where $M$ is the biggest constant
appearing in the guards of its transitions. Without loss of generality
we assume that $M \ge 0$, in other words there is at least one guard
in $\Aa$. Nodes of $RG(\Aa)$ are of the form $(q,r)$ for $q$ a state
of $\Aa$ and $r\in \Rr_M$ a region. There is a transition
$(q,r)\xra{t}(q',r')$ if there are $\val\in r$, $\d\in\Rpos$ and
$\val'\in r'$ with $(q,\val)\xra{\d,t}(q',\val')$. Observe that a
transition in the region graph is not decorated with a delay.  The
graph $RG^d(\Aa)$ is defined similarly but using the $\regequiv^d_M$
relation.

It will be important to understand the properties of pre- and
post-stability of regions or d-regions~\cite{TYB:FMSD:2005}. We state
them formally. A transition $(q,r)\xra{t}(q',r')$ in a region graph or
a d-region graph is:
\begin{itemize}
\item \emph{Pre-stable} if for every $\val\in r$ there are $\val'\in
  r'$, $\delta \in \Rpos$ s.t. $(q,\val)\xra{\delta,t}(q',\val')$.
\item \emph{Post-stable} if for every  $\val'\in r'$ there are $\val\in
  r$, $\delta \in \Rpos$ s.t. $(q,\val)\xra{\delta,t}(q',\val')$.
\end{itemize}

The following lemma explains our interest in $\regequiv^d_M$
relation. The main fact is that both region graphs are pre-stable and
this allows to decide the existence of a non-Zeno run easily by
Theorem~\ref{thm:empty-region-graph}.

\begin{lemma}[Pre and post-stability~\cite{B:FMSD:2004}]
Transitions in $RG^d(\Aa)$ are pre-stable and post-stable. Transitions
in $RG(\Aa)$ are pre-stable but not necessarily post-stable. 
\end{lemma}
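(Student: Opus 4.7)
The lemma packs three independent claims: pre-stability holds for both $RG(\Aa)$ and $RG^d(\Aa)$, post-stability additionally holds for $RG^d(\Aa)$, and post-stability can fail for $RG(\Aa)$. All three rest on the fact that $\regequiv_M$ (and its refinement $\regequiv^d_M$) behaves as a time-abstract bisimulation, with the diagonal refinement providing the extra information needed to invert resets.

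For \emph{pre-stability} (both graphs), I would fix a transition $(q,r)\xra{t}(q',r')$ with $t=\tuple{q,g,R,q'}$, witnessed by some $\val_0\in r$, $\delta_0\in\Rpos$, $\val'_0\in r'$. Given an arbitrary $\val\in r$, the goal is to exhibit $\delta$ so that $\val+\delta$ lies in the same (d-)region as $\val_0+\delta_0$. The standard argument is that, as $t$ increases from $0$, the valuations $\val+t$ and $\val_0+t$ traverse the same sequence of regions, with region changes triggered exactly at times when some fractional part reaches $0$ or two fractional parts coincide; since $\val\regequiv_M\val_0$ (resp.\ $\regequiv^d_M$), the combinatorial pattern of these events is identical along both trajectories. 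Picking $\delta$ so that $\val+\delta$ enters the same (d-)region as $\val_0+\delta_0$ yields $\val+\delta\models g$, and applying the reset gives $\reset{R}(\val+\delta)\in r'$ because resets respect (d-)region-equivalence.

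For \emph{post-stability of $RG^d(\Aa)$} I would argue dually. Given any $\val'\in r'$, the key input beyond pre-stability is that condition $4^d$ records every diagonal difference $\val(x)-\val(y)$, and such differences are invariant under time elapse. Together with the realiser $\val_0,\delta_0,\val'_0$, this pins the pre-reset valuation down on all clocks up to a common offset $\delta$: on non-reset clocks we read the pre-reset values directly off $\val'$, and on reset clocks we recover them using the diagonal constraints from $\val_0+\delta_0$. Choosing $\delta$ to bring the pre-reset valuation back into $r$ is then analogous to the choice of $\delta$ in the pre-stability argument.

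To show \emph{post-stability fails for $RG(\Aa)$}, I would exhibit a concrete counterexample. Because $\regequiv_M$ is blind to diagonal differences, one can set up a source region $r$ and a transition whose forward image is a proper subset of the target region $r'$, leaving valuations in $r'$ whose predecessors would require a diagonal difference in the source incompatible with $r$. A small example with three clocks and a guard that forces time to elapse to an integer value suffices. The main technical work of the whole lemma is the careful case analysis for the trajectory-matching step underlying pre-stability and the post-stability argument for $RG^d$ (tracking how the trajectory crosses integer and fractional-coincidence boundaries); once that is done, the counterexample for $RG(\Aa)$ is a routine finite check.
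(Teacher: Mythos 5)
The paper never proves this lemma; it imports it wholesale from the cited reference, so there is no in-paper argument to measure yours against. Your decomposition into the three claims and the route you take for each is the standard one, and the outline is sound: pre-stability follows from the time-successor property of (d-)regions together with invariance of regions under guards and resets, post-stability of $RG^d(\Aa)$ from the time-invariance of clock differences, and post-stability of $RG(\Aa)$ fails by a small counterexample.

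Two places deserve more care than your sketch gives them. First, in the post-stability step the claim that condition $4^d$ ``pins the pre-reset valuation down on all clocks'' overstates what it provides: it records differences only relative to integers below the maximal constant, and it says nothing at all that determines the pre-reset values of the clocks in $R$ --- those are genuinely free, and you need a separate (routine) argument that the fiber of $\reset{R}$ over $\val'$ meets the d-region of $\val_0+\delta_0$. The load-bearing use of the diagonal constraints is the other half of the step: guaranteeing that every valuation of the target d-region has a time predecessor inside $r$. That is precisely where plain regions break, and the failure has a specific shape that your proposed counterexample does not quite name: it is driven by a clock exceeding $M$ (where conditions 2--4 of $\regequiv_M$ go silent), not by delays hitting integer values. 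Two clocks suffice: with $M=1$, take the region $r$ given by $x=0\land y>1$ and a transition with guard $y>1$ and no reset; the delay $1/2$ from $(0,1.5)$ lands in the region $r'$ given by $0<x<1\land y>1$, yet the valuation $(0.5,1.2)\in r'$ has no predecessor in $r$, since the unique candidate delay $1/2$ drives $y$ down to $0.7$. The d-region refinement is exactly what separates such valuations from the reachable ones by remembering the difference $y-x$ across the bound $M$ (note that for this the diagonal constraints must reach up to the maximal constant itself --- a definitional detail inherited from the cited reference that is worth checking when you instantiate $4^d$). Making the reset-inversion argument and this two-clock example explicit would complete the proof.
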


Consider two sequences
\begin{gather}\label{eq:run}
\tuple{q_0,\val_0}\xrightarrow{\delta_0,t_0}
    \tuple{q_1,\val_1}\xrightarrow{\delta_1,t_1}
    \cdots\\
\label{eq:region-path}
    \tuple{q_0,r_0}
    \xrightarrow{t_0}
    \tuple{q_1,r_1}\xrightarrow{t_1}
    \cdots 
\end{gather}
where the first is a run in $\Aa$, and the second is a path in
$RG(\Aa)$ or $RG^d(\Aa)$. We say that the first is an \emph{instantiation}
of the second if $\val_i\in r_i$ for all $i\geq 0$. Equivalently, we
say that the second is an \emph{abstraction} of the first. The
following lemma is a direct consequence of the pre-stability property.
\begin{lemma}
  \label{lem:region-run-abstraction}
  Every path in $RG(\Aa)$ is an abstraction of a run of $\Aa$, and
  conversely, every run of $\Aa$ is an instantiation of a path in
  $RG(\Aa)$. Similarly for $RG^d(\Aa)$.
\end{lemma}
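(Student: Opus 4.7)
The plan is to prove the two directions separately, with the forward direction being essentially unpacking the definition and the backward direction being a direct inductive application of pre-stability.

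For the forward direction (every run is an instantiation of a path), I would take a run
\[
\tuple{q_0,\val_0}\xrightarrow{\delta_0,t_0}\tuple{q_1,\val_1}\xrightarrow{\delta_1,t_1}\cdots
\]
of $\Aa$ and simply define $r_i := \clequiv{\val_i}{\regequiv_M}$ for each $i$. By definition of transitions in $RG(\Aa)$, the witnesses $\val_i\in r_i$, $\delta_i\in\Rpos$ and $\val_{i+1}\in r_{i+1}$ with $(q_i,\val_i)\xra{\delta_i,t_i}(q_{i+1},\val_{i+1})$ immediately yield $(q_i,r_i)\xra{t_i}(q_{i+1},r_{i+1})$ in $RG(\Aa)$. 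Since $\val_0=\vali$, the sequence $(q_0,r_0)(q_1,r_1)\cdots$ is a path in $RG(\Aa)$ starting at its initial node, and this path is abstracted by the run by construction.

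For the backward direction (every path is an abstraction of a run), I would build the run by induction on the length of the prefix. Given a path $(q_0,r_0)\xra{t_0}(q_1,r_1)\xra{t_1}\cdots$ in $RG(\Aa)$, set $\val_0 := \vali$; note $\val_0\in r_0$ since the path starts at the initial node. Assuming $\val_i\in r_i$ has been defined, apply pre-stability of the transition $(q_i,r_i)\xra{t_i}(q_{i+1},r_{i+1})$ (granted by the preceding lemma) to obtain $\delta_i\in\Rpos$ and $\val_{i+1}\in r_{i+1}$ with $(q_i,\val_i)\xra{\delta_i,t_i}(q_{i+1},\val_{i+1})$ in $\Aa$. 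Concatenating these one-step transitions produces an infinite run of $\Aa$, and the invariant $\val_i\in r_i$ maintained throughout the induction shows the given path is its abstraction.

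The case of $RG^d(\Aa)$ is identical, since the preceding lemma also grants pre-stability for transitions in $RG^d(\Aa)$; no other property of the region equivalence is used in either argument.

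There is no real obstacle here: the only subtlety is ensuring that the inductive construction in the backward direction is anchored at the genuine initial valuation $\vali$ rather than at an arbitrary element of $r_0$, which is automatic because $\vali$ lies in the initial region. Post-stability plays no role in this lemma, which is why the weaker pre-stability of $RG(\Aa)$ already suffices.
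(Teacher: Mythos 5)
Your proof is correct and follows exactly the route the paper intends: the paper gives no detailed argument, stating only that the lemma is a direct consequence of pre-stability, and your inductive instantiation of a path (anchored at $\vali$) together with the definitional run-to-path direction is precisely that argument spelled out.
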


This lemma allows us to relate the existence of an accepting run of
$\Aa$ to the existence of paths with special properties in $RG(\Aa)$
or $RG^d(\Aa)$. We say that a path as in~\eqref{eq:region-path}
\emph{satisfies the Büchi condition} if it has infinitely many
occurrences of states from $\Acc$. The path is called
\emph{progressive}~\cite{AD:TCS:1994,TYB:FMSD:2005} if for every clock
$x\in X$:\label{progressive}
\begin{itemize}
\item either $x$ is almost always above $M$: there is $n$ with
  $r_i\sat x>M$ for all $i>n$;
\item or $x$ is reset infinitely often and strictly positive
  infinitely often: for every $n$ there are $i,j>n$ such that
  $r_i\sat(x=0)$ and $r_j\sat(x>0)$.
\end{itemize}

\begin{theorem}[\cite{AD:TCS:1994}]
  \label{thm:empty-region-graph}
  A TBA $\mathcal{A}$ has a non-Zeno run satisfying the Büchi
  conditions iff $RG(\Aa)$ has a progressive path satisfying the Büchi
  condition. Similarly for $RG^d(\Aa)$.
\end{theorem}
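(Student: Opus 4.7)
The plan is to use Lemma~\ref{lem:region-run-abstraction} as the bridge between runs of $\Aa$ and paths in $RG(\Aa)$ (and $RG^d(\Aa)$). Since abstraction and instantiation preserve the state sequence, the Büchi condition transfers trivially in both directions, so the substance of the proof is the equivalence between non-Zeno on the run side and progressive on the path side, which I analyse clock by clock.

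For the forward direction, I would let $\sigma$ be a non-Zeno Büchi run as in~(\ref{eq:run}) and $\pi$ its abstraction as in~(\ref{eq:region-path}). Fixing a clock $x$, if $x$ is reset only finitely often in $\sigma$, then after its last reset it accumulates all of the remaining delay, which is infinite by non-Zenoness; hence $\val_i(x)>M$ eventually and $r_i\sat x>M$ for all large $i$. Otherwise $x$ is reset infinitely often, giving $r_i\sat x=0$ infinitely often; since the total delay is infinite, infinitely many $\delta_i$ are strictly positive while $x$ is not being reset simultaneously, forcing $\val_{i+1}(x)>0$ and thus $r_j\sat x>0$ infinitely often.

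For the backward direction, I would take a progressive Büchi path $\pi$ and build an instantiating run $\sigma$ inductively, using the pre-stability half of the pre-/post-stability lemma: starting from $\vali\in r_0$, at each step I realise $(q_i,r_i)\xra{t_i}(q_{i+1},r_{i+1})$ by choosing a delay $\delta_i$ and a $\val_{i+1}\in r_{i+1}$. The Büchi condition on $\pi$ carries to $\sigma$ immediately. Non-Zenoness has to be extracted from progressiveness: a clock $x$ eventually above $M$ must have accumulated at least $M$ units of delay since its last reset, and a clock oscillating infinitely often between $r\sat x=0$ and $r'\sat x>0$ requires a strictly positive $\delta_i$ each time the path crosses from such an $r$ to such an $r'$.

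The hard part will be the delay-choice argument in the backward direction: a greedy instantiation could produce a Zeno run whose delays sum to a bounded total even though the path is progressive. The standard fix uses the finiteness of $RG(\Aa)$: only finitely many region pairs $(r,r')$ with $r\sat x=0$ and $r'\sat x>0$ appear, so the infimum of delays realising each such transition is bounded below by a uniform positive constant, and picking each relevant $\delta_i$ at least this constant forces $\sum\delta_i=\infty$. The $RG^d(\Aa)$ case is entirely analogous, slightly easier because post-stability gives extra flexibility when instantiating predecessors.
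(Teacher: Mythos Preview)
The paper does not give its own proof of this theorem; it is cited from \cite{AD:TCS:1994} (with the remark that the criterion extends to the present semantics via \cite{TYB:FMSD:2005}), so there is no in-paper argument to compare against. Two steps of your sketch do not go through as written.

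In the forward direction, your assertion that ``infinitely many $\delta_i$ are strictly positive while $x$ is not being reset simultaneously'' is unjustified and can fail: nothing excludes a non-Zeno run in which every transition with $\delta_i>0$ also resets $x$. Then $\val_j(x)=0$ for every $j$, the abstracted path satisfies $r_j\sat(x=0)$ throughout, and it is not progressive. Concretely, a single accepting state with a self-loop that resets $x$ already exhibits this. The difficulty is structural: the paper's $RG(\Aa)$ fuses delay and action into one edge, so the intermediate valuation $\val_i+\delta_i$ (where $x>0$) is never recorded; in Alur and Dill's original region automaton, time-successor edges are separate and the progress criterion refers to those intermediate regions. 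The statement here must be read modulo that distinction, and your abstraction-of-a-given-run argument cannot bridge it.

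In the backward direction, the claim that ``the infimum of delays realising each such transition is bounded below by a uniform positive constant'' is false. If $r\sat(x=0)$ and $r'\sat(0<x<1)$, the delay realising $(q,r)\xra{t}(q',r')$ can be any value in $(0,1)$; the infimum is $0$. Worse, once $\val_i$ has been fixed inductively you cannot freely enlarge $\delta_i$: another clock $y$ may have $\val_i(y)$ near an integer boundary that $r_{i+1}$ does not cross, forcing $\delta_i$ to be small. The standard repair is global rather than per-edge: instantiate so that all clock fractional parts lie on a fixed finite grid (e.g.\ multiples of $1/(|X|+1)$), which is compatible with every region and forces each strictly positive delay to be at least $1/(|X|+1)$; alternatively, pass to an ultimately periodic progressive path and argue that one full period necessarily realises a fixed positive duration.
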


The progress criterion above can be encoded adding an extra Büchi
accepting condition~\cite{AD:TCS:1994,TYB:FMSD:2005}. While
theorem~\ref{thm:empty-region-graph} gives an algorithm for solving
our problem, it turns out that this method is very impractical. The
number of regions for clocks $X$ and constant $M$ turns out to be
$\mathcal{O}(|X|!.2^{|X|}M^{|X|})$~\cite{AD:TCS:1994} and constructing
all of them, or even searching through them on-the-fly, has proved to
be very costly.

\subsection{Zones and zone graphs}
\label{sec:zones}
Timed verification tools use zones instead of regions.  A zone is a
set of valuations defined by a conjunction of two kinds of
constraints: comparison of the difference between two clocks with a
constant like $x-y\# c$, or comparison of the value of a single clock
with a constant like $x\# c$ for $x\in X$, $c\in\Nat$ and
$\#\in\{<,\leq,=,\geq,>\}$. For example $(x-y\geq 1)\land(y<2)$ is a
zone. While at first sight it may seem that there are more zones than
regions, this is not the case if we count only those that are
reachable from the initial valuation.

Since zones are sets of valuations defined by constraints, one can
define transitions directly on zones. For a transition $t$ in $\Aa$
and a zone $Z$, we have $(q,Z)\xra{t} (q',Z')$ if $Z'$ is the set of
valuations $\val'$ such that there exists $\val\in Z$ and $\d\in\Rpos$
and $(q,\val)\xra{\d,t}(q',\val')$. It is well-known that $Z'$ is a
zone. Moreover zones can be represented using Difference Bound
Matrices (DBMs), and transitions can be computed efficiently on
DBMs~\cite{D:AVMFSS:1990}. The problem is that the number of reachable
zones is not guaranteed to be finite~\cite{DT:TACAS:1998}.

In order to ensure that the number of reachable zones is finite, one
introduces abstraction operators. We mention the three most common
ones in the literature.  They refer to region graphs, $RG(\Aa)$ or
$RG^d(\Aa)$, and use the constant $M$ that is the maximal constant
appearing in the guards of $\Aa$.
\begin{itemize}
\item $\Closure_M(Z)$: the smallest union of regions containing $Z$;
\item $\Closure^d_M(Z)$: similarly but for d-regions;
\item $\Approx_M(Z)$: the smallest union of d-regions that is convex
  and that contains $Z$.
\end{itemize}

The following lemma establishes the links between the three
abstraction operators, and is very useful to transpose reachability
results from one abstraction to the other.
\begin{lemma}[\cite{B:FMSD:2004}]\label{lemma:abstraction-inclusion}
  For every zone $Z$: $Z\incl \Closure^d_M(Z)\incl
  \Approx_M(Z)\incl\Closure_M(Z)$.
\end{lemma}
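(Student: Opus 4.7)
The plan is to prove the three inclusions in turn; they become progressively more substantive.

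The first inclusion, $Z \subseteq \Closure^d_M(Z)$, is immediate from the definition, since $\Closure^d_M(Z)$ is by construction a set containing $Z$.

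The second, $\Closure^d_M(Z) \subseteq \Approx_M(Z)$, follows by minimality: $\Approx_M(Z)$ is by definition a union of d-regions that contains $Z$, while $\Closure^d_M(Z)$ is the \emph{smallest} such union.

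For the third inclusion, $\Approx_M(Z) \subseteq \Closure_M(Z)$, I would first verify that condition $4^d$ implies condition $4$ (given conditions 1--3): for clocks $x,y$ bounded by $M$, setting $c = \intpart{\val(x)} - \intpart{\val(y)} \in (-M,M)$, the inequality $\fracpart{\val(x)} \leq \fracpart{\val(y)}$ is equivalent to $\val(x) - \val(y) \leq c$, and condition $4^d$ transfers this equivalence to $\val'$. Hence d-region equivalence refines region equivalence, every region decomposes into d-regions, and $\Closure_M(Z)$ is itself a union of d-regions containing $Z$. By the minimality of $\Approx_M(Z)$ among \emph{convex} unions of d-regions containing $Z$, it would then suffice to show that $\Closure_M(Z)$ is convex.

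Convexity of $\Closure_M(Z)$ is the main obstacle. My plan is to argue pointwise: given $\val_1,\val_2 \in \Closure_M(Z)$, pick witnesses $\val'_1,\val'_2 \in Z$ with $\val'_i \regequiv_M \val_i$. By convexity of the zone $Z$ the segment $[\val'_1,\val'_2]$ lies in $Z$; one then tracks integer parts, zero-fractional-part indicators, and the pairwise orderings of fractional parts along both $[\val_1,\val_2]$ and $[\val'_1,\val'_2]$ to show that the former passes through the same sequence of regions as the latter. This would place the whole segment inside $\Closure_M(Z)$ and complete the proof.
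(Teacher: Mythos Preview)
The paper does not give its own proof of this lemma; it is quoted from~\cite{B:FMSD:2004} and used as a black box. So there is no in-paper argument to compare against, and the question becomes whether your proposal is correct.

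Your first two inclusions are fine and exactly as one would expect.

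Your plan for the third inclusion, however, has a fatal gap: you propose to show that $\Closure_M(Z)$ is convex, and then invoke minimality of $\Approx_M(Z)$ among convex unions of d-regions. But $\Closure_M(Z)$ is \emph{not} convex in general. The paper itself says so explicitly, immediately after introducing the three operators: ``$\Closure_M(Z)$ and $\Closure^d_M(Z)$ may not be convex, hence they may not be zones.'' Your segment-tracking argument therefore cannot go through. Concretely, the step ``the segment $[\val_1,\val_2]$ passes through the same sequence of regions as $[\val'_1,\val'_2]$'' fails: region equivalence pins down only the \emph{ordering} of fractional parts, not their values, so along the two segments the pairwise orderings of fractional parts (and the crossings of integer thresholds) can change at different parameter values, producing different region sequences.

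The actual proof in~\cite{B:FMSD:2004} does not attempt to show convexity of $\Closure_M(Z)$. Instead it works with an explicit description of $\Approx_M(Z)$ as a zone (obtained from a DBM for $Z$ by relaxing to $\infty$ the bounds that exceed $M$ in absolute value) and argues directly that every $\nu\in\Approx_M(Z)$ is $\regequiv_M$-equivalent to some $\nu'\in Z$, i.e.\ that the region of $\nu$ meets $Z$. Your reduction ``$\Closure_M(Z)$ is a union of d-regions containing $Z$'' is correct and is the right first move; what is missing is an argument for $\Approx_M(Z)\subseteq\Closure_M(Z)$ that does not rely on convexity of the right-hand side.
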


Similar to region graphs, we define simulation graphs where after
every transition a specific approximation operation is used. So we
have three graphs corresponding to the three approximation operations
above. Notice that $\Closure_M(Z)$ and $\Closure^d_M(Z)$ may not be
convex, hence they may not be zones~\cite{B:FMSD:2004}.

Take an automaton $\Aa$ and let $M$ be the biggest constant that
appears in the guards of its transitions. The simulation graph
$\sg(\Aa)$ has nodes of the form $(q,S)$ where $q$ is a state of $\Aa$
and $S$ is a set of valuations. The initial node is
$(q_0,\{\vali\})$. There is a transition $(q,S)\xra{t}
(q',\Closure_M(S'))$ in $\sg(\Aa)$ if $S'$ is the set of valuations
$\val'$ such that $(q,\val)\xra{\d,t}(q',\val')$ for some $\val\in S$
and $\d\in\Rpos$. Similarly, we define $\sg^d(\Aa)$ and $\sg^a(\Aa)$
by replacing $\Closure_M$ with $\Closure_M^d$ and $\Approx_M$
respectively. Observe that for every node $(q,S)$ that is reachable in
one of the three graphs above, $S$ is a union of regions or
d-regions. The notions of an abstraction of a run of $\Aa$, and an
instantiation of a path in the simulation graph, are defined in the same
way as that of region graphs.

Tools like Kronos or Uppaal use the $\Approx_M$ abstraction. The two other
abstractions are less interesting for implementations since the result
may not be convex. Nevertheless, they are useful in proofs. The
following Lemma (cf.~\cite{DT:TACAS:1998}) says that transitions in
$\sg(\Aa)$ are post-stable with respect to regions.
\begin{lemma}\label{ZG-region-post-stable}
  Let $(q,S)\xra{t}(q',S')$ be a transition in $\sg(\Aa)$ such that
  both $S$ and $S'$ are unions of regions. For every region $r'\incl
  S'$, there is a region $r\incl S$ such that $(q,r)\xra{t}(q',r')$ is
  a transition in $RG(\Aa)$.
\end{lemma}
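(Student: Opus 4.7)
The plan is to unpack the definitions of $\Closure_M$ and of transitions in $\sg(\Aa)$, and then to produce the required region $r$ by picking a witness valuation.

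First, I would write $S' = \Closure_M(S'')$ where
\[
S'' = \{\val' \mid \exists\, \val\in S,\ \d\in\Rpos,\ (q,\val)\xra{\d,t}(q',\val')\}
\]
is the exact time-successor/transition image of $S$ under $t$. This is just the definition of a transition in $\sg(\Aa)$.

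The key observation is that $\Closure_M(S'')$ is, by definition, the smallest union of regions that contains $S''$. Hence it coincides with the union of those regions that have nonempty intersection with $S''$. In particular, since $r' \subseteq S' = \Closure_M(S'')$, the region $r'$ must intersect $S''$. I would then pick some $\val' \in r' \cap S''$, which by the definition of $S''$ is witnessed by some $\val \in S$ and some $\d \in \Rpos$ with $(q,\val)\xra{\d,t}(q',\val')$. Setting $r := \clequiv{\val}{\regequiv_M}$, the hypothesis that $S$ is itself a union of regions immediately gives $r \subseteq S$. Finally, the triple $(\val,\d,\val')$ is exactly the witness required for the region-graph transition $(q,r)\xra{t}(q',r')$.

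There is essentially no obstacle: the only point worth highlighting is why $r'$ must meet $S''$ rather than sit in some ``purely closure-added'' part of $S'$. The reason, which I would state explicitly, is that $\Closure_M(S'')$ is minimal among unions of regions containing $S''$, so dropping any region disjoint from $S''$ would still leave a union of regions containing $S''$, contradicting minimality. This same argument works verbatim for $\sg^d(\Aa)$ with d-regions replacing regions, so no pre-/post-stability of the region graph is invoked — the lemma is really a direct consequence of the minimality property of $\Closure_M$.
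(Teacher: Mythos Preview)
Your proposal is correct and follows essentially the same approach as the paper: write $S'=\Closure_M(S'')$, observe that any region $r'\subseteq S'$ must meet $S''$, pick a witness $\val'\in r'\cap S''$ together with its preimage $\val\in S$, and take $r$ to be the region of $\val$. Your explicit minimality argument for why $r'\cap S''\neq\emptyset$ is a welcome elaboration of what the paper simply asserts.
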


\begin{proof}
  Take a transition $(q,S)\xra{t}(q',S')$ and let us examine what it
  means. By definition, $S'=\Closure_M(S'')$ where $S''$ is the set of
  valuations $\val''$ that satisfy $(q,\val)\xra{\d,t}(q',\val'')$ for
  some $\val\in S$ and $\d\in\Rpos$. Consider $r'\subseteq S'$; the
  intersection $r'\cap S''$ is not empty. Take $\val'\in r'\cap S''$,
  and let $\val\in S$ be a valuation such that
  $(q,\val)\xra{\d,t}(q',\val')$ for some $\d\in\Rpos$. Let $r$ be the
  region of $\val$. We have that $r\cap S$ is not empty, hence
  $r\subseteq S$ as $S$ is a union of regions. By definition
  $(q,r)\xra{t}(q',r')$ is a transition in $RG(\Aa)$.
  \qed
\end{proof}

We get a correspondence between paths in \simu graphs and
runs of $\Aa$.

\begin{theorem}[\cite{T:TOCL:2009}]
  \label{thm:concretization_tri09}
  Every path in $\sg(\mathcal{A})$ is an abstraction of a run of $\Aa$,
  and conversely, every run of $\Aa$ is an instantiation of a path in
  $\sg(\mathcal{A})$. Similarly for $\sg^d$ and $\sg^a$.
\end{theorem}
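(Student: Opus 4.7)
The plan is to treat the two implications separately. The direction ``every run instantiates a path'' is routine: given a run $\tuple{q_0,\val_0}\xrightarrow{\delta_0,t_0}\tuple{q_1,\val_1}\xrightarrow{\delta_1,t_1}\cdots$ of $\Aa$, I set $S_0=\{\vali\}$ and inductively let $(q_{i+1},S_{i+1})$ be the unique $\sg(\Aa)$-successor of $(q_i,S_i)$ via $t_i$. From $\val_i\in S_i$ it follows that $\val_{i+1}$ lies in the set of $t_i$-successors of $S_i$, and hence in its enlargement $\Closure_M(\cdot)=S_{i+1}$. The same induction handles $\sg^d$ and $\sg^a$ because each of the three abstraction operators returns a superset of its argument.

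The substantive direction is to show that every infinite path $\pi=(q_0,S_0)\xrightarrow{t_0}(q_1,S_1)\xrightarrow{t_1}\cdots$ of $\sg(\Aa)$ is the abstraction of some run. My strategy is to first extract an infinite path in the region graph $RG(\Aa)$ that refines $\pi$, i.e.\ a sequence $(q_0,r_0)\xrightarrow{t_0}(q_1,r_1)\xrightarrow{t_1}\cdots$ with $r_i\subseteq S_i$ for every $i$, and then invoke Lemma~\ref{lem:region-run-abstraction} to instantiate that region path as a concrete run. Since the instantiation satisfies $\val_i\in r_i\subseteq S_i$, the resulting run also instantiates $\pi$.

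To build the region path I combine post-stability (Lemma~\ref{ZG-region-post-stable}) with König's lemma. Let $T$ be the tree whose depth-$n$ nodes are prefixes $(q_0,r_0)\xrightarrow{t_0}\cdots\xrightarrow{t_{n-1}}(q_n,r_n)$ of $RG(\Aa)$ satisfying $r_i\subseteq S_i$ for each $i\le n$. Because every $S_i$ is a finite union of regions, $T$ is finitely branching. To see that $T$ contains a node at every depth $n$, pick any region $r_n\subseteq S_n$ and apply Lemma~\ref{ZG-region-post-stable} repeatedly to walk backwards, producing $r_{n-1}\subseteq S_{n-1},\ldots,r_0\subseteq S_0$ with the required $RG(\Aa)$-transitions. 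König's lemma then yields an infinite branch, which is the desired region path.

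The main obstacle is exactly the leap from per-depth finite witnesses to a single coherent infinite path: post-stability only extends one transition backwards, and different choices of the starting region $r_n$ may unwind to completely different prefixes, so the finite-branching of $T$ is what allows these finite witnesses to be glued. For $\sg^d(\Aa)$ and $\sg^a(\Aa)$ the argument goes through verbatim after replacing regions by d-regions and Lemma~\ref{ZG-region-post-stable} by its d-region analogue, which is available because $RG^d(\Aa)$ is itself post-stable (pre-and-post-stability lemma above); Lemma~\ref{lemma:abstraction-inclusion} ensures that reachable symbolic sets in $\sg^d$ and $\sg^a$ remain finite unions of d-regions, so König's lemma still applies.
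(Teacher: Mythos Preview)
Your argument for $\sg(\Aa)$ is the paper's own: the paper builds a DAG whose level-$i$ nodes are regions $r_i\subseteq S_i$, uses Lemma~\ref{ZG-region-post-stable} to guarantee that every node has a predecessor, bounds branching by the number of regions, extracts an infinite path (K\"onig), and instantiates it via Lemma~\ref{lem:region-run-abstraction}. Your tree-of-prefixes formulation is the same device, and your treatment of the easy direction is equivalent to the paper's.

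One point needs correction in your extension to $\sg^d$ and $\sg^a$. The d-region analogue of Lemma~\ref{ZG-region-post-stable} is \emph{not} a consequence of post-stability of $RG^d(\Aa)$: that property speaks about transitions \emph{inside} $RG^d$, whereas what you need is that transitions of the simulation graph are post-stable with respect to d-regions. For $\sg^d$ the analogue does hold, but the correct reason is that the proof of Lemma~\ref{ZG-region-post-stable} goes through verbatim with $\Closure^d_M$ replacing $\Closure_M$; the key step ``$r'\cap S''\neq\emptyset$'' uses only that the closure is the union of the cells meeting $S''$. For $\sg^a$ that proof does \emph{not} go through: $\Approx_M(S'')$ is a convex hull of d-regions and can contain d-regions disjoint from $S''$, so for such an $r'\subseteq S'$ your backward walk has nowhere to start. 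The paper does not confront this either---it proves the hard direction only for $\sg$ and handles the converse by arguing first in $\sg^d$ and then transferring to $\sg^a$ and $\sg$ via Lemma~\ref{lemma:abstraction-inclusion}---so your write-up is no worse than the paper's on this point, but your stated justification is the wrong one.
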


\begin{proof}
  We first show that a path in $\sg(\Aa)$ is an abstraction of a run of
  $\Aa$.  Take a path $(q_0,S_0)\xra{t_0}(q_1,S_1)\xra{t_1}\dots$ in
  $\sg(\Aa)$. Construct a DAG with nodes $(i,q_i,r_i)$ such that $r_i$
  is a region in $S_i$. We put an edge from $(i,q_i,r_i)$ to
  $(i+1,q_{i+1},r_{i+1})$ if $(q_i,r_i)\xra{t_i}(q_{i+1},r_{i+1})$. By
  Lemma~\ref{ZG-region-post-stable}, every node in this DAG has at
  least one predecessor, and the branching of every node is bounded by
  the number of regions. Hence, this DAG has an infinite path that is
  a path in $RG(\Aa)$. By Lemma~\ref{lem:region-run-abstraction} this
  path can be instantiated to a run of $\Aa$.
  
  To conclude the proof one can easily verify that a run of $\Aa$ can
  be abstracted to a path in $\sg^d(\Aa)$. Then using
  Lemma~\ref{lemma:abstraction-inclusion} this path can be converted
  to a path in $\sg^a(\Aa)$, and later to one in $\sg(\Aa)$.  \qed
\end{proof}

\begin{figure}
  \begin{tikzpicture}[line width=1pt]
    \begin{scope}
      \node (z0_0) at (0,0) {\footnotesize $(0,0=x=y)$};
      \node (z1_0) at (2.5,0) {\footnotesize $(1,0=x=y)$};
      \node (z1_0') at (2.5,1) {\footnotesize $(1,0=x<y)$};
      \node (z0_1) at (5,0) {\footnotesize $(0,0=x=y)$};
      \node (z2_0) at (7.5,0) {\footnotesize $(2,0=y=x)$};
      \node (z2_0') at (7.5,1) {\footnotesize $(2,0=y<x)$};
      \node (z0_0') at (10,0) {\footnotesize $(0,0=x=y)\ \cdots$};
    \end{scope}
    \begin{scope}[->]
      \draw (z0_0) -- (z1_0);
      \draw (z0_0) -- (z1_0');
      \draw (z1_0) -- (z0_1);
      \draw (z0_1) -- (z2_0);
      \draw (z0_1) -- (z2_0');
      \draw (z2_0) -- (z0_0');
    \end{scope}
  \end{tikzpicture}
  \caption{A part of the region graph for the automaton $\Aa_2$ in
    Figure~\ref{fig:gzg-eg}.}
  \label{fig:part-RG-A2}
\end{figure}

\begin{figure}
  \begin{tikzpicture}[line width=1pt]
    \begin{scope}
      \node (z0_0) at (0,0) {\footnotesize $(0,0=x=y)$};
      \node (z1_0) at (2.5,0) {\footnotesize $(1,0=x\leq y)$};
      \node (z0_1) at (5,0) {\footnotesize $(0,0=x=y)$};
      \node (z2_0) at (7.5,0) {\footnotesize $(2,0=y\leq x)$};
      \node (z0_0') at (10,0) {\footnotesize $(0,0=x=y)\ \cdots$};
    \end{scope}
    \begin{scope}[->]
      \draw (z0_0) -- (z1_0);
      \draw (z1_0) -- (z0_1);
      \draw (z0_1) -- (z2_0);
      \draw (z2_0) -- (z0_0');
    \end{scope}
  \end{tikzpicture}
  \caption{A part of the symbolic graph for the automaton $\Aa_2$ in
    Figure~\ref{fig:gzg-eg}.}
  \label{fig:part-SG-A2}
\end{figure}

Observe that Theorem~\ref{thm:concretization_tri09} does not guarantee
that a path we find in a simulation graph has an instantiation that is
non-Zeno. This cannot be decided from $\sg(\Aa)$ by using the progress
criterion defined in page~\pageref{progressive} as we show
now. Consider for instance the automaton $\Aa_2$ in
Figure~\ref{fig:gzg-eg} which has only Zeno runs as both $x$ and $y$
must remain equal to $0$ on every run. Figure~\ref{fig:part-RG-A2}
shows a part of $RG(\Aa_2)$. The infinite path starting from node
$(0,0=x=y)$ is not progressive as none of the clocks can have a
positive value. Moreover, it can be seen that every node where a clock
has a positive value is a deadlock node. Figure~\ref{fig:part-SG-A2}
depicts the corresponding part of $\sg(\Aa_2)$. This path satisfies
the progress criterion as both $x$ and $y$ are reset and may have
positive values infinitely often, despite all its instantiations being
Zeno. The progress criterion fails due to the loss of pre-stability in
$SG(\Aa_2)$: none of the valuations with either $x>0$ or $y>0$ have a
successor. In Section~\ref{sec:nonzeno}, we show how to avoid this
problem.

In the subsequent sections, we are interested only in the simulation
graph $SG^a(\Aa)$. Observe that the symbolic zone obtained by the
approximation of a zone using $\Approx_M$ is in fact a zone.  Hence,
we prefer to call it a zone graph and denote it as $ZG^a(\Aa)$. Every
node of $ZG^a(\Aa)$ is of the form $(q,Z)$ where $Z$ is a zone.



\section{Finding non-Zeno paths}
\label{sec:nonzeno}

As we have remarked above, in order to use
Theorem~\ref{thm:concretization_tri09} we need to be sure that a path
we get can be instantiated to a non-Zeno run. We discuss the solutions
proposed in the literature, and then offer a better one. Thanks to
pre-stability of the region graph, the progress criterion on regions
has been defined in~\cite{AD:TCS:1994} for selecting runs from
$RG(\Aa)$ that have a non-Zeno instantiation (see
Section~\ref{progressive}). Notice that the semantics of TBA
in~\cite{AD:TCS:1994} constrains all delays $\delta_i$ to be strictly
positive, but the progress criterion can be extended to the stronger
semantics that is used nowadays (see~\cite{TYB:FMSD:2005} for
instance). However, since zone graphs are not pre-stable, this method
cannot be directly extended to zone graphs.

\subsection{Adding one clock}
\label{sec:non_zeno:adding_one_clock}

A common solution to deal with Zeno runs is to transform an automaton
into a \emph{strongly non-Zeno automaton}, i.e. such that all runs
satisfying the Büchi condition are guaranteed to be non-Zeno. We
present this solution here and discuss why, although simple, it may
add an exponential factor in the decision procedure.

\begin{figure}
  \begin{tikzpicture}[shorten >=1pt,node distance=2cm,on grid,auto]
    \node[state] (b) {$b^k$};
    \node[state,accepting] (a) [above=of b] {$a^k$};
    \node[node distance=1.5cm] (predots) [left=of b] {$\dots$};
    \node[node distance=1.5cm] (postdots) [right=of b] {$\dots$};
    \draw[->] (b) edge [bend left] node {\footnotesize $y\leq d$} (a);
    \draw[->] (a) edge [bend left] node {\footnotesize
      $\{x_1,\dots,x_{k-1}\}$} (b);
    \draw[->] (predots) edge (b);
    \draw[->] (b) edge (postdots);
    \node[node distance=1.2cm] (figname) [below=of b]
    {$\mathbf{\Vv_k}$};
  \end{tikzpicture}
  \hfill
  \begin{tikzpicture}[shorten >=1pt,node distance=3cm,on grid]
    \node[state] (b) {$b^k$};
    \node[state,accepting] (a') [above left=of b] {${a^k_1}$};
    \node[state] (a) [above right=of b] {$a^k_2$};
    \node[node distance=1.5cm] (predots) [left=of b] {$\dots$};
    \node[node distance=1.5cm] (postdots) [right=of b] {$\dots$};
    \draw[->] (b) edge node [left,xshift=-2mm] {\footnotesize
      $\begin{array}{l}y\leq d\ \wedge\\ z\geq 1\end{array}$} node
    [left,xshift=-1mm,near start] {\footnotesize $\{z\}$} (a');
    \draw[->] (b) edge [bend right] node [right,xshift=2mm]
    {\footnotesize $\begin{array}{l}y\leq d\end{array}$}
    (a);
    \draw[->] (a') edge (a);
    \draw[->] (a) edge node [near start,left,xshift=-1mm]
    {\footnotesize $\{x_1,\dots,x_{k-1}\}$} (b);
    \draw[->] (predots) edge (b);
    \draw[->] (b) edge (postdots);
    \node[node distance=1.2cm] (figname) [below=of b]
    {$\mathbf{\Ww_k}=SNZ(\Vv_k)$};
  \end{tikzpicture}
  \caption{The gadgets $\Vv_k$ (left) and $\Ww_k=SNZ(\Vv_k)$ (right).}
  \label{fig:gadget-Vk-Wk}
\end{figure}

The main idea behind the transformation of $\Aa$ into a strongly
non-Zeno automaton $SNZ(\Aa)$ is to ensure that on every accepting
run, time elapses for $1$ time unit infinitely often. Hence, it is
sufficient to check for the existence of an accepting run as it is
non-Zeno for granted. Consider the automaton $\Vv_k$ and its
transformation into $\Ww_k=SNZ(\Vv_k)$ in
Figure~\ref{fig:gadget-Vk-Wk}. The transformation adds one clock $z$
and duplicates accepting states (e.g. $a^k$ in $\Vv_k$). One copy is
no longer accepting whereas the other is accepting, but it can be
reached only when $z\ge 1$ (these are respectively $a_2^k$ and $a_1^k$
in $\Ww_k$). Moreover, when an accepting state is reached $z$ is reset
to $0$. As a result, every accepting run in $\Vv_k$ has a
corresponding run in $\Ww_k$ where every occurrence of $a^k$ is
replaced by an occurrence of either $a_1^k$ or $a_2^k$. Since two
occurrences of the accepting state $a_1^k$ have to be separated by at
least one time unit, an accepting run in $\Ww_k$ is necessarily
non-Zeno.

A slightly different construction is mentioned
in~\cite{AM:SFM-RT:2004}. Of course one can also have other
modifications, and it is impossible to treat all the imaginable
constructions at once. Our objective here is to show that the
constructions proposed in the literature produce a phenomenon causing
proliferation of zones that can sometimes be exponential in the number
of clocks.  The discussion below will focus on the construction
from~\cite{TYB:FMSD:2005}, but the one from~\cite{AM:SFM-RT:2004}
suffers from the same problem.

\begin{figure}
  \footnotesize \centering
  \begin{tikzpicture}[->]
    \tikzstyle{every node}=[rounded corners=2mm,rectangle]
    \node[draw] (b1) at (0,2) 
    {$\begin{array}{c}b^2\\ 0\leq y\leq x_1\leq x_2\\
        \ \end{array}$}; 
    \node[draw] (a1) at (5,2)
    {$\begin{array}{c}a^2\\ 0\leq y\leq x_1\leq x_2\\ \wedge\ y\leq
        d\end{array}$};
    \node[draw] (b2) at (0,0)
    {$\begin{array}{c}b^2\\ 0= x_1\leq y\leq x_2\\ \ \end{array}$};
    \node[draw] (a2) at (5,0) 
    {$\begin{array}{c}a^2\\ 0\leq x_1\leq
        y\leq x_2\\ \wedge\ y\leq d\end{array}$};
    \node (predots_b1) at (-2.5,2.5) {$\dots$};
    \node (postdots_b1) at (-2.5,1.5) {$\dots$};
    \node (postdots_b2) at (-2.5,0) {$\dots$};
    \draw (b1) -- node[above] {$y\leq d$} (a1);
    \draw (a1) .. controls +(270:1cm) and +(90:1cm) .. 
    node[above] {$\{x_1\}$} (b2);
    \draw (b2) -- node[above] {$y\leq d$} (a2);
    \draw (a2) .. controls +(270:1cm) and +(270:1cm) ..
    node[above] {$\{x_1\}$} (b2);
    \draw (predots_b1) -- (b1);
    \draw (b1) -- (postdots_b1);
    \draw (b2) -- (postdots_b2);
  \end{tikzpicture}
  \caption{Part of $ZG(\Vv_2)$}
  \label{fig:zgv}
\end{figure}

\begin{figure}
  \begin{tikzpicture}[shorten >=1pt,node distance=2cm,on grid,auto]
    \node (predots) {$\dots$};
    \node[state,node distance=1.5cm] (c_0) [right=of predots]
    {$c^k_0$};
    \node[state] (c_1) [right=of c_0] {$c^k_1$};
    \node (middledots) [right=of c_1] {$\dots$};
    \node[state] (c_k) [right=of middledots] {$c^k_k$};
    \node[state] (c_y) [right=of c_k] {$c^k_y$};
    \node[node distance=1.5cm] (postdots) [right=of c_y] {$\dots$};
    \draw[->] (predots) edge (c_0);
    \draw[->] (c_0) edge node {\footnotesize $\{x_k\}$} (c_1);
    \draw[->] (c_1) edge node {\footnotesize $\{x_{k-1}\}$}
    (middledots);
    \draw[->] (middledots) edge node {\footnotesize $\{x_1\}$} (c_k);
    \draw[->] (c_k) edge node {\footnotesize $\{y\}$} (c_y);
    \draw[->] (c_y) edge (postdots);
  \end{tikzpicture}
  \caption{The gadget $\Rr_k$.}
  \label{fig:gadget-Rk}
\end{figure}

The problem comes from the fact that the constraint $z\geq 1$ may be a
source of rapid multiplication of the number of zones in the zone
graph of $SNZ(\Aa)$. Consider $\Vv_k$ and $\Ww_k$ from
Figure~\ref{fig:gadget-Vk-Wk} and let us say that $k=2$. Starting at
the state $b^2$ of $\Vv_2$ in the zone $0\leq y\leq x_1\leq x_2$,
there are two reachable zones with state $b^2$. This is depicted in
Figure~\ref{fig:zgv} where after two traversals of the cycle formed by
$b^2$ and $a^2$, we reach a zone that is invariant for the
cycle. Moreover, from the two zones with state $b^2$ in
Figure~\ref{fig:zgv}, reseting $x_1$ followed by $y$ as $\Rr_1$ (in
Figure~\ref{fig:gadget-Rk}) does, we reach the same zone $0\leq y\leq
x_1\leq x_2$.

In contrast starting in $b^2$ of $\Ww_2=SNZ(\Vv_2)$ from $0\leq y\leq
x_1\leq x_2 \leq z$ gives at least $d$ zones. The part of $ZG(\Ww_2)$
in Figure~\ref{fig:zgw} gives the sequence of transitions in the zone
graph of $\Ww_2$ starting from the zone $(b^2, 0 \le y \le x_1 \le x_2
\le z)$ by successive iterations of the cycle that goes through $b^2$,
$a_1^2$ and $a_2^2$. After a certain point, every traversal induces an
extra distance between the clocks $y$ and $z$. Clearly, there are at
least $d$ zones in this case. Resetting $x_1$ followed by $y$ as
$\Rr_1$ (in Figure~\ref{fig:gadget-Rk}) does still yield $d$ zones

\begin{figure}
  \scriptsize \centering
  \begin{tikzpicture}[->]
    \tikzstyle{every node}=[rounded corners=2mm,rectangle,inner
    sep=0pt]
    \node(predots_b1) at (-2.2,0.35) {$\dots$};
    \node(postdots_b1) at (-2.2,-0.35) {$\dots$};
    \node[draw] (b1) at (0,0)
    {$\begin{array}{c}b^2\\ 0\leq y\leq x_1\leq x_2\leq z\\
        \ \end{array}$};
    \node[draw] (a11) at (4,0)
    {$\begin{array}{c}a_1^2\\ 0=z\leq y\leq x_1\leq x_2\\ \wedge\
        y-z\geq 0\,\wedge\, y\leq d\end{array}$};
    \node[draw] (a21) at (8,0)
    {$\begin{array}{c}a_2^2\\ 0\leq z\leq y\leq x_1\leq x_2\\
        y-z\geq 0 \end{array}$};
    \node (postdots_b2) at (-2.2,-2) {$\dots$};
    \node[draw] (b2) at (0,-2)
    {$\begin{array}{c}b^2\\ 0= x_1\leq z\leq y\leq x_2\\
        \ \end{array}$};
    \node[draw] (a12) at (4,-2)
    {$\begin{array}{c}a_1^2\\ 0=z\leq x_1\leq y\leq x_2\\
        \wedge\,y-z\geq 1\,\wedge\,y\leq d\end{array}$};
    \node[draw] (a22) at (8,-2) {$\begin{array}{c}a_2^2\\ 0\leq z\leq
        x_1\leq y\leq x_2\\ \wedge\ y-z\geq 1\end{array}$};
    \node (postdots_b3) at (-2.2,-4) {$\dots$};
    \node[draw] (b3) at (0,-4)
    {$\begin{array}{c}b^2\\ 0= x_1\leq z\leq y\leq x_2\\ \wedge\
        y-z\geq 1 \end{array}$};
    \node[minimum width=2cm,minimum height=5mm] (a13) at (4,-4)
    {$\dots$};
    \node[minimum width=2cm,minimum height=5mm] (a23) at (8,-4)
    {$\dots$};
    \node (postdots_b4) at (-2.2,-6) {$\dots$};
    \node[draw] (b4) at (0,-6)
    {$\begin{array}{c}b^2\\ 0= x_1\leq z\leq y\leq x_2\\ \wedge\
        y-z\geq d-1\end{array}$};
    \node[draw] (a14) at (4,-6)
    {$\begin{array}{c}a_1^2\\ 0=z\leq x_1\leq y\leq x_2\\
        \wedge\,y-z=d\,\wedge\,y=d\end{array}$};
    \node[draw] (a24) at (8,-6)
    {$\begin{array}{c}a_2^2\\ 0\leq z\leq x_1\leq y\leq x_2\\
        \wedge\ y-z=d\end{array}$};
    \node (postdots_b5) at (-2.2,-8) {$\dots$};
    \node[draw] (b5) at (0,-8)
    {$\begin{array}{c}b^2\\ 0= x_1\leq z\leq y\leq x_2\\ \wedge\
        y-z=d\end{array}$};
    \draw (predots_b1) edge (b1);
    \draw (b1) edge (postdots_b1);
    \draw (b1) edge node [above] {\tiny $\begin{array}{r}y\leq d\\
        \wedge\,z\geq 1\end{array}$} node [below] {\tiny $\{z\}$}
    (a11);
    \draw (a11) edge (a21);
    \draw (a21) .. controls +(270:1cm) and +(90:1cm) .. node [above]
    {\tiny $\{x_1\}$} (b2);
    \draw (b2) edge (postdots_b2);
    \draw (b2) edge node [above] {\tiny $\begin{array}{r}y\leq d\\
        \wedge\,z\geq 1\end{array}$} node [below] {\tiny $\{z\}$}
    (a12); 
    \draw (a12) edge (a22);
    \draw (a22) .. controls +(270:1cm) and +(90:1cm) .. node [above]
    {\tiny $\{x_1\}$} (b3);
    \draw (b3) edge (postdots_b3);
    \draw (b3) edge node [above] {\tiny $\begin{array}{r}y\leq d\\
        \wedge\,z\geq 1\end{array}$} node [below] {\tiny $\{z\}$} (a13);
    \draw[dotted] (a13) edge (a23);
    \draw (a23) .. controls +(270:1cm) and +(90:1cm) .. node
    [above] {\tiny $\{x_1\}$} (b4);
    \draw (b4) edge (postdots_b4);
    \draw (b4) edge node [above] {\tiny $\begin{array}{r}y\leq d\\
        \wedge\,z\geq 1\end{array}$} node [below] {\tiny $\{z\}$}
    (a14);
    \draw (a14) edge (a24);
    \draw (a24) .. controls +(270:1cm) and +(90:1cm) .. node [above]
    {\tiny $\{x_1\}$} (b5);
    \draw (b5) edge (postdots_b5);
  \end{tikzpicture}
  \caption{Part of $ZG(\Ww_2)$.}
  \label{fig:zgw}
\end{figure}

We now exploit this fact to give an example of a TBA $\Aa_n$ whose
zone graph has a number of zones linear in the number of clocks, but
$\Bb_n=SNZ(\Aa_n)$ has a zone graph of size exponential in the number
of clocks.

\begin{figure}
  \begin{tikzpicture}[shorten >=1pt,node distance=1cm,on grid]
    \node[draw,initial,initial text=] (Rn) {\footnotesize $\Rr_n$};
    \node[draw] (Vn) [right of=Rn] {\footnotesize $\Vv_n$};
    \node (middledots) [right of=Vn] {\footnotesize $\dots$};
    \node[draw] (R2) [right of=middledots] {\footnotesize $\Rr_2$};
    \node[draw] (V2) [right of=R2] {\footnotesize $\Vv_2$};
    \draw[->] (Rn) edge (Vn);
    \draw[->] (Vn) edge (middledots);
    \draw[->] (middledots) edge (R2);
    \draw[->] (R2) edge (V2);
    \node (An) [below of=middledots,node distance=0.7cm]
    {$\mathbf{\Aa_n}$};
  \end{tikzpicture}
  \hfill
  \begin{tikzpicture}[shorten >=1pt,node distance=1cm,on grid]
    \node[draw,initial,initial text=] (Rn) {\footnotesize $\Rr_n$};
    \node[draw] (Wn) [right of=Rn] {\footnotesize $\Ww_n$};
    \node (middledots) [right of=Wn] {\footnotesize $\dots$};
    \node[draw] (R2) [right of=middledots] {\footnotesize $\Rr_2$};
    \node[draw] (W2) [right of=R2] {\footnotesize $\Ww_2$};
    \draw[->] (Rn) edge (Wn);
    \draw[->] (Wn) edge (middledots);
    \draw[->] (middledots) edge (R2);
    \draw[->] (R2) edge (W2);
    \node (Bn) [below of=middledots,node distance=0.7cm]
    {$\mathbf{\Bb_n}$};
  \end{tikzpicture}
  \caption{Automata $\Aa_n$ (left) and $\Bb_n=SNZ(\Aa_n)$ (right).}
  \label{fig:An-Bn}
\end{figure}

$\Aa_n$, in Figure~\ref{fig:An-Bn}, is constructed from the automata
gadgets $\Vv_k$ and $\Rr_k$ as shown in Figures~\ref{fig:gadget-Vk-Wk}
and~\ref{fig:gadget-Rk}. Observe that the role of $\Rr_k$ is to
enforce an order $0\leq y\leq x_1 \leq \dots\leq x_k$ between clock
values.  By induction on $k$ one can compute that there are only two
zones at locations $b^k$ since $\Rr_{k+1}$ made the two zones in
$b^{k+1}$ collapse into the same zone in $b^k$. Hence the number of
nodes in the zone graph of $\Aa_n$ is $\Oo(n)$.

Let us now consider $\Bb_n$, the strongly non-Zeno automaton obtained
from $\Aa_n$ following~\cite{TYB:FMSD:2005}. Every gadget $\Vv_k$ gets
transformed to $\Ww_k$ as shown in Figure~\ref{fig:An-Bn}. While
exploring $\Ww_k$, one introduces a distance between the clocks
$x_{k-1}$ and $x_k$. So when leaving it one gets zones with $x_{k} -
x_{k-1} \ge c$, where $c \in \{0,1,2, \ldots , d \}$. The distance
between $x_k$ and $x_{k-1}$ is preserved by $\Rr_k$. In consequence,
$\Ww_n$ produces at least $d+1$ zones. For each of these zones
$\Ww_{n-1}$ produces $d+1$ more zones. In the end, the zone graph of
$\Bb_n$ has at least $(d+1)^{n-1}$ zones at the state $b^2$. The zones
obtained with the state $b^k$ are of the form

\begin{align*}
  & 0 = x_1 = \ldots = x_{k-1} \le z \le y \le x_{k} \le
  \ldots \le x_n\\
  & \wedge\ \bigwedge_{i\in\{k, \dots, n-1\}} x_{i+1} - x_{i} \ge c_i
  \qquad\qquad\qquad\qquad\qquad
  \text{where each}\ c_i \in \{0, 1, \dots, d\}
\end{align*}

So the zone graph has at least $(d+1)^{n-k+1}$ zones at state
$b^k$. Hence, the zone graph of $\Bb_n$ contains at least
$(d+1)^{n-1}$ zones.

We have thus shown that $\Aa_n$ has $\Oo(n)$ zones while
$\Bb_n=SNZ(\Aa_n)$ has an exponential number of zones even when the
constant $d$ is $1$. One could argue that the transformation
in~\cite{TYB:FMSD:2005} can be transformed in such a way to prevent
the combinatorial explosion. In particular, it is often suggested to
replace $z\geq 1$ by a guard that matches the biggest constant in the
automaton, that is $z\geq d$ in our case. However, this would still
yield an exponential blowup as every zone with state $b^k$ yields two
different zones with state $b^{k-1}$ that do not collapse going
through $\Rr_{k-1}$. Observe also that the construction shows that
even with two clocks the number of zones blows exponentially in the
binary representation of $d$. Note that the automaton $\Aa_n$ does not
have a non-Zeno accepting run. Hence, every search algorithm is
compelled to explore all the zones of $\Bb_n$.

\subsection{A more efficient solution}
\label{sec:non_zeno:more_efficient_solution}

We aim to decide if a given path in a zone graph has a non-Zeno
instantiation. This is equivalent to deciding if all instantiations
of a path are Zeno. There are essentially two reasons for this:
\begin{itemize}
\item there may be a clock $x$ that is reset finitely many times but
  bound infinitely many times by guards  $x\le c$:
  \begin{equation*}
    \bullet \xra{x\le 1} \bullet \xra{\{x\}} \bullet \xra{} \cdots
    \underbrace{\xra{} \bullet \xra{x\leq 2} \bullet \xra{} \bullet
    \xra{x\le 1} \bullet \xra{} \cdots}_{\text{suffix with no reset
      of}\ x}
  \end{equation*}
  
\item or time may not be able to elapse at all due to infinitely many
  transitions that check $x=0$, forcing $x$ to stay at $0$:
  \begin{equation*}
    \bullet \xra{\{y\}} \bullet \xra{x=0} \bullet \xra{\{x\}} \bullet
    \xra{y=0} \bullet \xra{\{y\}} \bullet \xra{x=0} \cdots
  \end{equation*}
\end{itemize}

Our solution stems from a realization that we only need one non-Zeno
run satisfying the Büchi condition and so in a way transforming an
automaton to strongly non-Zeno is excessive. We propose not to modify
the automaton, but to introduce additional information to the zone
graph $ZG^a(\Aa)$. The nodes will now be triples $(q,Z,Y)$ where
$Y\incl X$ is the set of clocks that can potentially be equal to
$0$. It means in particular that other clock variables, i.e. those
from $X-Y$ are assumed to be bigger than $0$. We write $(X-Y)>0$ for
the constraint saying that all the variables in $X-Y$ are not $0$.

\begin{definition}\label{defn:zero_guessing_zone_graph}
  Let $\Aa$ be a TBA over a set of clocks $X$.  The \emph{guessing
    zone graph} $GZG^a(\mathcal{A})$ has nodes of the form $(q,Z,Y)$
  where $(q,Z)$ is a node in $ZG^a(\Aa)$ and $Y\incl X$. The initial
  node is $(q_0,Z_0,X)$, with $(q_0,Z_0)$ the initial node of
  $ZG^a(\Aa)$. In $GZG^a(\Aa)$ there are transitions:
  \begin{itemize}
  \item $(q,Z,Y)\xra{t}(q',Z',Y\cup R)$ if there is a
    transition $(q,Z)\xra{t}(q',Z')$ in $ZG^a(\Aa)$ with
    $t=(q,g,R,q')$, and there are valuations $\val\in Z$, $\val'\in
    Z'$, and $\d\in\Rpos$ such that $\val+\d\sat (X-Y)>0$ and
    $(q,\val)\xra{\d,t}(q,\val')$;
  \item $(q,Z,Y)\xra{\t}(q,Z,Y')$,   on a  new auxiliary letter $\t$,
    for $Y'=\es$ or $Y'=Y$.
  \end{itemize}
\end{definition}
The additional component $Y$ expresses some information about possible
valuations with which we can take a transition. The first case is about
transitions that are realizable when clocks outside $Y$ are
positive. While it is formulated in a more general way, one can think
of this transition as being instantaneous: $\d=0$. Then we have the second kind
of transitions, namely the transitions on $\t$, that allow us to
nondeterministically guess when time can pass.

It will be useful to distinguish some types of transitions and nodes
of $GZG^a(\Aa)$.

\begin{definition}
  We call a transition of $GZG^a(\Aa)$ a \emph{zero-check} when some
  clock is forced to be equal to $0$ by the guard $g$ of the
  transition; formally, for some clock $x$, for all $\val\in Z$, and
  all $\d\in\Rpos$ such that $\val+\d\sat g$ we have $(\val+\d)(x)=0$.
\end{definition}

The role of $Y$ sets will become obvious in the construction below. In
short, from a node $(q,Z,\emptyset)$, that is with $Y=\es$, every
reachable zero-check will be preceded by the reset of the variable
that is checked, and hence nothing prevents a time elapse in this
node. We will be particularly interested in the following types of
nodes to find non-Zeno accepting runs.

\begin{definition}
  A node $(q,Z,Y)$ of $GZG^a(\Aa)$ is \emph{clear} if the third
  component is empty: $Y=\es$. A node is A node is \emph{accepting} if
  $q$ is an accepting state.
\end{definition}

\begin{example}
  Figure~\ref{fig:A1_GZG} depicts a TBA $\Aa_1$ along with its zone
  graph $ZG^a(\Aa_1)$ and its guessing zone graph $GZG^a(\Aa_1)$ where
  $\tau$-loops have been omitted.

  The guessing zone graph construction can be optimized by restricting
  the guessed sets to clocks that are indeed equal to zero in some
  valuation in the zone. For instance, from the node $(b,x\geq
  1,\{x\})$ in Figure~\ref{fig:A1_GZG}, $x$ cannot be checked for zero
  unless it is first reset. Hence, this node can safely be removed
  from $\GZG(\Aa_1)$, yielding a smaller graph. In the resulting
  graph, the only loop goes through a $\tau$ transition. This
  emphasizes that time must elapse from node $(a,x=0,\{x\})$ in order
  to take a transition with guard $x\geq 1$. An optimized guessing
  zone graph construction is given in~\cite{Herbreteau:CONCUR:2011}.
\end{example}

\begin{figure}[ht]
  \footnotesize
  \begin{tikzpicture}[shorten >=1pt,node distance=2cm,on
    grid,auto,initial text=]
    \node[state,initial] (a) {$a$}; \node[state] (b) [below=of a]
    {$b$}; \draw[->] (a) edge [bend left] node {$x\geq 1$} (b);
    \draw[->] (b) edge node {$\begin{array}{l}x\leq 1\\
        \{x\}\end{array}$} (a);
    \node[node distance=1cm] (figname) [below=of b]
    {$\mathbf{\Aa_1}$};
  \end{tikzpicture}
  \hfill
  \begin{tikzpicture}[shorten >=1pt,node distance=2cm,on
    grid,auto,initial text=,every state/.style={rectangle,rounded
      corners=2mm}]
    \node[state,initial] (a) {$a,x=0$}; \node[state] (b) [below=of a]
    {$b,x\geq 1$}; \draw[->] (a) edge [bend left] node {$x\geq 1$}
    (b);
    \draw[->] (b) edge node {$\begin{array}{l}x\leq 1\\
        \{x\}\end{array}$} (a);
    \node[node distance=1cm] (figname) [below=of b]
    {$\mathbf{ZG^a(\Aa_1)}$};
  \end{tikzpicture}
  \hfill
  \begin{tikzpicture}[shorten >=1pt,node distance=2cm and 4cm,on
    grid,auto,every state/.style={rectangle,rounded
      corners=3mm},initial text=]
    \node[state,initial] (ay) {$a,x=0,\{x\}$}; \node[state] (ae)
    [right=of ay] {$a,x=0,\emptyset$}; \node[state] (by) [below=of ay]
    {$b,x\geq 1,\{x\}$}; \node[state] (be) [right=of by] {$b,x\geq
      1,\emptyset$}; \draw[->] (ay) edge node {$x\geq 1$} (by);
    \draw[->,dashed] (ay) edge node {$\tau$} (ae); \draw[->] (ae) edge
    node {$x\geq 1$} (be);
    \draw[->] (by) edge [bend left] node {$\begin{array}{l} x\leq 1\\
        \{x\}\end{array}$} (ay); \draw[->,dashed] (by) edge node
    {$\tau$} (be); \draw[->] (be) edge node [pos=0.35,sloped]
    {$\begin{array}{l} x\leq 1\\ \{x\}\end{array}$} (ay); \node[node
    distance=1cm and 2cm] (figname) [below right=of by]
    {$\mathbf{GZG^a(\Aa_1)}$};
  \end{tikzpicture}
  \caption{A TBA $\Aa_1$ and the guessing zone graph $GZG^a(\Aa_1)$
    (with $\tau$ self-loops omitted for clarity).}
  \label{fig:A1_GZG}
\end{figure}

Notice that directly from the definition it follows that a path in
$GZG^a(\Aa)$ determines a path in $ZG^a(\Aa)$ obtained by removing
$\t$ transitions and the third component from nodes.

In order to state the main theorem succinctly we need some notions

\begin{definition}
  A variable $x$ is \emph{bounded} by a transition of $GZG^a(\Aa)$ if
  the guard of the transition implies $x\leq c$ for some constant
  $c$. More precisely: $x$ is bounded by the transition
  $(q,Z,Y)\xra{(q,g,R,q')}(q',Z',Y')$, if for all $\val\in Z$ and
  $\d\in\Rpos$ such that $\val+\d\sat g$, we have $(\val+\d)(x)\leq c$
  for some $c\in\Nat$. A variable is \emph{reset} by the transition if
  it belongs to the reset set $R$ of the transition.
\end{definition}

\begin{definition}
  \label{defn:blocked_path}
  We say that a path is \emph{blocked} if there is a variable that is
  bounded infinitely often and reset only finitely often by the
  transitions on the path. Otherwise the path is called
  \emph{unblocked}. 
\end{definition}
Obviously, paths corresponding to non-Zeno runs are
unblocked.

\begin{theorem}
  \label{thm:zeno_on_zero_guessing}
  A TBA $\mathcal{A}$ has a non-Zeno run satisfying the Büchi
  condition iff there exists an unblocked path in $GZG^a(\Aa)$
  visiting both an accepting node and a clear node infinitely often.
\end{theorem}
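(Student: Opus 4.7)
The proof is by both implications.

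For ($\Rightarrow$), start from a non-Zeno accepting run $\sigma=\tuple{q_0,\val_0}\xra{\d_0,t_0}\tuple{q_1,\val_1}\cdots$ of $\Aa$. Theorem~\ref{thm:concretization_tri09} yields a path $(q_0,Z_0)\xra{t_0}(q_1,Z_1)\cdots$ in $ZG^a(\Aa)$ of which $\sigma$ is an instantiation. I lift this to $GZG^a(\Aa)$ by setting $Y_i=\{x\in X:\val_i(x)=0\}$, so $Y_0=X$ matches the initial node. At each step I dispatch on $\d_i$: if $\d_i=0$, take the real transition directly (valid since $\val_i$ has non-$Y_i$ clocks positive and $Y_{i+1}=Y_i\cup R_i$); if $\d_i>0$, precede it by a $\tau$-move to the clear node $(q_i,Z_i,\es)$ (valid since $\val_i+\d_i$ has every clock strictly positive). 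Non-Zenoness forces $\d_i>0$ infinitely often, so clear nodes are visited infinitely often; accepting visits carry over directly. For unblockedness, if some clock $x$ were bounded on infinitely many transitions yet reset only finitely often, then after its last reset $x$ would grow with cumulative delay $\sum_j\d_j=\infty$, eventually exceeding any fixed bound---contradiction.

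For ($\Leftarrow$), let $\pi$ be an unblocked path in $GZG^a(\Aa)$ visiting accepting and clear nodes infinitely often. Projecting $\pi$ by dropping $Y$-components and $\tau$-transitions yields a path $\bar\pi$ in $ZG^a(\Aa)$. Theorem~\ref{thm:concretization_tri09} supplies some instantiation of $\bar\pi$, but it might be Zeno. My plan is to exhibit a \emph{progressive} path in $RG^d(\Aa)$ that follows $\bar\pi$ and then apply Theorem~\ref{thm:empty-region-graph} to extract a non-Zeno accepting run. Adapting the DAG argument used in the proof of Theorem~\ref{thm:concretization_tri09} to d-regions and using post-stability of $RG^d(\Aa)$, I build an infinite DAG whose level-$i$ vertices are d-regions $r\subseteq Z_i$ and whose edges obey $RG^d(\Aa)$-transitions; K\"onig's lemma then provides an infinite branch, i.e.\ a path in $RG^d(\Aa)$ that abstracts a run of $\Aa$.

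The main obstacle is to steer this branch so that it is \emph{progressive} in the sense of page~\pageref{progressive}. I argue clock by clock. If $x$ is bounded by guards of $\pi$ only finitely often, beyond some index the DAG can be restricted to d-regions with $x>M$, giving the first disjunct of progress. Otherwise $x$ is bounded infinitely often, and unblockedness of $\pi$ forces $x$ to be reset infinitely often, yielding the required occurrences of $x=0$. For strict positivity of $x$ infinitely often, I invoke the clear-node hypothesis: since resets of $x$ and clear nodes both occur at infinitely many indices, a simple interleaving argument shows that infinitely many inter-reset windows of $x$ contain a clear node $(q_j,Z_j,\es)$, and by definition of $GZG^a(\Aa)$ the outgoing real transition at such a node is witnessed by a valuation with every clock strictly positive, forcing $x>0$ there. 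Combining these requirements for all clocks uniformly is the technical heart; I plan to handle it by a diagonalisation on the DAG, adding one constraint per clock at successively deeper levels so that each constraint prunes only finitely many branches and the DAG remains infinite, keeping K\"onig's lemma applicable to the refinement.
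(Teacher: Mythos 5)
Your forward direction is essentially the paper's proof of its Lemma~\ref{lem:TBA_to_zero_guessing}: abstract the run to a path in $ZG^a(\Aa)$, interleave $\tau$-moves that empty $Y$ exactly when $\delta_i>0$, check the invariant $\val_i\sat(X-Y_i)>0$, and derive unblockedness from non-Zenoness. That part is fine.

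The backward direction is where you depart from the paper, and it is where your argument has a genuine gap. You propose to build a \emph{progressive} path in the region graph and invoke Theorem~\ref{thm:empty-region-graph}. The DAG/K\"onig construction only gives you \emph{some} infinite branch whose level-$i$ node is \emph{some} region inside $S_i$; you have no control over which. Your key step --- ``the outgoing real transition at a clear node is witnessed by a valuation with every clock strictly positive, forcing $x>0$ there'' --- conflates the existence of such a witness valuation in the zone with the chosen branch actually passing through that valuation's region. Steering the branch through designated regions at infinitely many levels is exactly the statement that needs proof: imposing the constraint ``visit a region satisfying $x>0$ at some level beyond $n$'' can prune \emph{all} infinite branches, not ``only finitely many'', so the proposed diagonalisation does not keep K\"onig's lemma applicable without a substantive new argument (this is precisely where the $Y$-component must do real work, and you have not made it do any). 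A secondary error: when $x$ is bounded only finitely often you cannot in general restrict to regions with $x>M$ (if $x$ is reset infinitely often it is $0$ infinitely often in every instantiation; and even when $x$ is never reset again, a Zeno instantiation may keep $x\le M$ forever, and nothing guarantees the zones, let alone the branch, eventually lie above $M$). For comparison, the paper avoids the region graph entirely in this direction: it takes an arbitrary (possibly Zeno) instantiation from Theorem~\ref{thm:concretization_tri09} and surgically shifts it, adding $1/2$ to the clocks not recently reset between two clear positions; the fact that such a clock $x$ satisfies $x\notin Y_k$ (since $Y=\es$ at a clear node and only reset clocks re-enter $Y$) guarantees $g\wedge(x>0)$ is consistent, so the shifted valuation still satisfies the integer-bounded guard, and iterating this inserts $1/2$ time units infinitely often. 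If you want to keep your region-graph route, you would need to prove a pre-stability-style steering lemma for the guessed graph; as written, the ``technical heart'' you defer is the whole theorem.
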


The proof of Theorem~\ref{thm:zeno_on_zero_guessing} follows from
Lemmas~\ref{lem:TBA_to_zero_guessing}
and~\ref{lem:zero_guessing_to_TBA} below. It is in
Lemma~\ref{lem:zero_guessing_to_TBA} that the third component of
states is used.

At the beginning of the section we had recalled that the progress
criterion~\cite{AD:TCS:1994} stated in page~\pageref{progressive}
characterizes the paths in region graphs that have non-Zeno
instantiations. We had mentioned that it cannot be directly extended
to zone graphs since their transitions are not
pre-stable. Lemma~\ref{lem:zero_guessing_to_TBA} below shows that by
slightly complicating the zone graph we can recover a result very
similar to Lemma~4.13 in~\cite{AD:TCS:1994}.

\begin{lemma}
  \label{lem:TBA_to_zero_guessing}
  If $\mathcal{A}$ has a non-Zeno run satisfying the Büchi condition,
  then in $GZG^a(\Aa)$ there is an unblocked path visiting both an
  accepting node and a clear node infinitely often.
\end{lemma}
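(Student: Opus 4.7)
The plan is to start from a non-Zeno accepting run
$\rho = \tuple{q_0,\val_0}\xra{\delta_0,t_0}\tuple{q_1,\val_1}\xra{\delta_1,t_1}\cdots$
of $\Aa$ and lift its $ZG^a(\Aa)$-abstraction
$(q_0,Z_0)\xra{t_0}(q_1,Z_1)\xra{t_1}\cdots$
(which exists by Theorem~\ref{thm:concretization_tri09}, with $\val_i\in Z_i$) to a path in $GZG^a(\Aa)$ by decorating nodes with third components $Y_i$ and inserting $\t$-transitions at suitable places. Non-Zenoness gives $\sum_i\delta_i=\infty$, hence an infinite increasing sequence of \emph{clearing points} $i_1<i_2<\cdots$ with $\delta_{i_k}>0$; the idea is to $\t$-clear the third component to $\es$ exactly at these points.

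Concretely, I would keep $Y_i=X$ for $0\le i\le i_1$ (the initial value $Y_0=X$ is preserved by the forced update $Y_{i+1}=Y_i\cup R_i$), insert the $\t$-transition $(q_{i_k},Z_{i_k},Y_{i_k})\xra{\t}(q_{i_k},Z_{i_k},\es)$ at each clearing point $i_k$, and between clearings let $Y$ grow by $Y_{i+1}=Y_i\cup R_i$, giving $Y_i=R_{i_k}\cup\cdots\cup R_{i-1}$ for $i_k<i\le i_{k+1}$. For every non-$\t$ step I will use $(\val_i,\val_{i+1},\delta_i)$ from $\rho$ itself as the witness required by Definition~\ref{defn:zero_guessing_zone_graph} and verify the side condition $\val_i+\delta_i\sat(X-Y)>0$ at the source. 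Before $i_1$ and immediately after a clearing point, this is immediate: either $Y=X$ makes the condition vacuous, or $Y=\es$ at a position with $\delta_{i_k}>0$ makes every clock strictly positive after delay.

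The only real calculation is at a non-clearing position $i$ with $i_k<i<i_{k+1}$, where $\delta_i=0$. Here I would argue that any clock $x$ with $\val_i(x)=0$ must have been reset somewhere in $[i_k,i-1]$: the positive delay $\delta_{i_k}$ separates $\val_i$ from any earlier reset, and also from the initial valuation if $x$ has never been reset, so the last reset of $x$ occurs strictly after $i_k$. Such clocks lie in $Y_i$ by construction, hence $(X-Y_i)>0$ holds at $\val_i$ as required. This bookkeeping is the main technical step, and I expect it to be routine once clearing points are tied to strictly positive delays.

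The three required properties then follow. Clear nodes $(q_{i_k},Z_{i_k},\es)$ appear at every $i_k$; accepting nodes appear infinitely often because the underlying state sequence $q_0,q_1,\ldots$ of the constructed path coincides with that of $\rho$ (the inserted $\t$-transitions do not change the state); and for unblockedness, if some $x$ were bounded infinitely often and reset only finitely often along the constructed path, the same would hold along $\rho$ (the $\t$-transitions carry no guards and no resets), so after the last reset of $x$ the value $\val_i(x)$ would grow as the accumulated delays, which diverges by non-Zenoness, contradicting the infinitely many bounds $x\le c$.
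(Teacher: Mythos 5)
Your proposal is correct and follows essentially the same route as the paper's proof: abstract the non-Zeno run to a path in $ZG^a(\Aa)$, decorate it with third components that accumulate resets and are $\tau$-cleared exactly at positions with positive delay, and use the run's own valuations as witnesses for the realizability condition $(X-Y)>0$. Your explicit bookkeeping at zero-delay positions (a clock that is zero at $\val_i$ must have been reset since the last positive delay, hence lies in $Y_i$) is precisely the ``easy induction'' the paper leaves implicit.
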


\begin{proof}
  Let $\r$ be a non-Zeno run of $\mathcal{A}$:
  \begin{equation*}
    (q_0,\nu_0)\xrightarrow{\delta_0,t_0}
    (q_1,\nu_1)\xrightarrow{\delta_1,t_1}
    \cdots
  \end{equation*}
  By Theorem~\ref{thm:concretization_tri09}, it is a concretization of
  a path $\s$ in $ZG^a(\mathcal{A})$:
  \begin{equation*}
    (q_0,Z_0)\xrightarrow{t_0}
    (q_1,Z_1)\xrightarrow{t_1}
    \cdots
  \end{equation*}

  Let $\sigma'$ be the following sequence:
  \begin{equation*}
    (q_0,Z_0,Y_0)\xrightarrow{\tau}
    (q_0,Z_0,Y_0')\xrightarrow{t_0}
    (q_1,Z_1,Y_1)\xrightarrow{\tau}
    (q_1,Z_1,Y_1')\xrightarrow{t_1}
    \cdots
  \end{equation*}
  where $Y_0=X$, $Y_i$ is determined by the transition, and $Y'_i=Y_i$
  unless $\d_i>0$ when we put $Y'_i=\es$. We need to see that this is
  indeed a path in $GZG^a(\Aa)$. For this we need to see that every
  transition
  $(q_i,Z_i,Y_i')\xrightarrow{t_i}(q_{i+1},Z_{i+1},Y_{i+1})$ is
  realizable from a valuation $\val$ such that $\val\sat
  (X-Y'_i)>0$. But an easy induction on $i$ shows that actually
  $\val_i\sat (X-Y'_i)>0$.

  Since $\rho$ is non-Zeno there are infinitely many $i$ with
  $Y'_i=\es$. Since the initial run is non-Zeno, $\s'$ is
  unblocked.
  \qed
\end{proof}

\begin{lemma}
  \label{lem:zero_guessing_to_TBA}
  Suppose $GZG^a(\mathcal{A})$ has an unblocked path visiting
  infinitely often both a clear node and an accepting node then
  $\mathcal{A}$ has a non-Zeno run satisfying the Büchi condition.
\end{lemma}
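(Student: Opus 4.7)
The plan is to convert the given eventually periodic path in the finite graph $\GZG(\Aa)$ into a non-Zeno Büchi-accepting run of $\Aa$. I first project $\pi$ to a path $\sigma$ in $\ZG(\Aa)$ by dropping the $\tau$-transitions and the third components; by Theorem~\ref{thm:concretization_tri09}, $\sigma$ is an abstraction of some run of $\Aa$, so what remains is to select an instantiation with infinite total duration.

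Because $\GZG(\Aa)$ is finite, $\pi$ admits a lasso decomposition: a finite prefix followed by a cycle $\gamma$ iterated forever. By hypothesis $\gamma$ can be taken to contain both a clear node and an accepting node, and to be unblocked. The problem thus reduces to producing an iterable instantiation of $\gamma$, as a finite run-fragment whose total delay is bounded below by some fixed $c>0$. Concatenating a prefix instantiation (obtained from Theorem~\ref{thm:concretization_tri09} applied to the finite prefix of $\sigma$) with infinitely many copies of this cycle instantiation then yields the required run: each cycle contributes delay at least $c$ so the duration diverges, and each cycle visits the accepting node so the Büchi condition is satisfied.

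To build the cycle instantiation I split on whether every clock of $\Aa$ is reset inside $\gamma$. If some clock $x$ is not reset anywhere in $\gamma$, then unblockedness forces $x$ also not to be bounded by any guard in $\gamma$, so choosing delays that let $x$ grow by any fixed positive amount per cycle is consistent with every guard of $\gamma$ and with the invariant zone at its entry point (which does not restrict $x$ from above by the $\Approx_M$ abstraction once $x>M$). If on the other hand every clock is reset in $\gamma$, then the clear-node transition in $\gamma$ supplies the positive delay: its defining condition $(\val+\delta)(x)>0$ for all $x$ forces either a positive intermediate delay between some reset of a clock and the clear-node step, or a positive $\delta$ at the clear-node step itself; in either sub-case the cycle has strictly positive total delay.

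The main obstacle is to ensure, in the all-clocks-reset sub-case, that this positive per-cycle delay can be chosen \emph{uniformly} across iterations, rather than as a possibly vanishing sequence that would sum to something finite. I would address this either via a fixed-point argument, selecting a starting valuation in a periodic orbit of the cycle's affine return map on the invariant zone, or by a direct DBM-level argument showing that the constraints associated with $\gamma$ admit a rational lower bound on the delay forced at the clear-node step. Once this uniform $c>0$ is in place, the infinite iteration has duration at least $kc \to \infty$, completing the construction of the non-Zeno accepting run.
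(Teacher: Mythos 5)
There are genuine gaps, and they sit exactly where the third component $Y$ of the guessing zone graph has to do its work. In your Case A you infer, from the fact that some clock $x$ is neither reset nor bounded on the cycle $\gamma$, that ``choosing delays that let $x$ grow by any fixed positive amount per cycle is consistent with every guard of $\gamma$.'' That inference uses only unblockedness and the existence of an untouched clock, and it is false in general: the guards' silence about $x$ says nothing about whether the \emph{other} clocks allow time to elapse. Concretely, take the automaton $\Aa_2$ of Figure~\ref{fig:gzg-eg} and add an idle clock $z$ that is never reset and never tested. The cycle in its zone graph is unblocked ($x$ and $y$ are bounded by the zero-checks but also reset; $z$ is neither bounded nor reset), so your Case A reasoning applies verbatim and would conclude that a positive amount of time can pass per iteration --- yet every instantiation has total duration $0$. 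What rules this example out of the lemma is precisely that no clear node is reachable on the cycle of $\GZG(\Aa_2)$, and your Case A never invokes the clear node. Case B does invoke it, but misreads it: the defining condition of a transition out of a node $(q,Z,\es)$ is that \emph{some} valuation in $Z$ and \emph{some} delay make all clocks positive and satisfy the guard. This is an existential statement about the zone, not a property of the particular instantiation you have built; since zone-graph transitions are not pre-stable, your instantiation may reach a valuation from which no positive delay is possible. So the ``forced positive delay'' is not established. Finally, you yourself flag the remaining obstacle --- a uniform per-cycle lower bound $c>0$ together with a fixed point of the cycle's return map so that copies of the cycle instantiation can actually be concatenated --- and leave it as two candidate strategies. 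That is not a technicality to be deferred; it is the heart of the proof.

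For comparison, the paper's proof never builds the run cycle-by-cycle and so avoids all three difficulties. It takes an arbitrary instantiation $\rho$ of the projected path (Theorem~\ref{thm:concretization_tri09}); if $\rho$ is Zeno, it lets $X^r$ be the clocks reset infinitely often, moves to a tail where the residual duration is below $1/2$ and no clock outside $X^r$ is ever bounded, and then, between two consecutive positions $i<j$ with $Y_i=Y_j=\es$ and all of $X^r$ reset in between, shifts every not-yet-reset clock upward by $1/2$. The crucial step is that a clock $x\in X^r$ not reset between $i$ and $k$ satisfies $x\notin Y_k$ (after a clear node, only resets add clocks to $Y$), hence the guard at step $k$ is consistent with $x>0$; combined with $0\le(\val_k+\d_k)(x)<1/2$, the shifted value in $[1/2,1)$ still satisfies every constraint on $x$. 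It is the $Y$-sets along the entire segment between clear positions --- not the clear node in isolation --- that certify the perturbed valuations remain legal, and the perturbation inserts a fixed $1/2$ time unit infinitely often with no fixed-point or return-map analysis. If you want to salvage your cycle-based plan, you would have to import essentially this shifting argument to justify both of your cases.
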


\begin{proof}
  Let $\s$ be a path in $GZG^a(\Aa)$ as required by the assumptions of
  the lemma (without loss of generality we assume every alternate
  transition is a $\t$ transition):
  \begin{equation*}
    (q_0,Z_0,Y_0)\xrightarrow{\tau} (q_0,Z_0,Y_0')\xrightarrow{t_0}
    \cdots
    (q_i,Z_i,Y_i)\xrightarrow{\tau} (q_i,Z_i,Y_i')\xrightarrow{t_i}
    \cdots
  \end{equation*}
  Take a corresponding path in $ZG^a(\Aa)$ and one instantiation
  $\r=(q_0,\nu_0),(q_1,\nu_1)\dots$ that exists by
  Theorem~\ref{thm:concretization_tri09}. If it is non-Zeno then we
  are done.

  Suppose $\r$ is Zeno. We now show how to build a non-Zeno
  instantiation of $\s$ from $\r$. Let $X^r$ be the set of variables
  reset infinitely often on $\s$. As $\s$ is unblocked, every variable
  not in $X^r$ is bounded only finitely often. Since $\r$ is Zeno,
  there is an index $m$ such that the duration of the suffix of the
  run starting from $(q_m,\nu_m)$ is bounded by $1/2$, and no
  transition in this suffix bounds a variable outside $X^r$. Let $n>m$
  be such that every variable from $X^r$ is reset between $m$ and
  $n$. Observe that $\nu_n(x)< 1/2$ for every $x\in X^r$.

  Take positions $i,j$ such that $i,j>n$, $Y_i=Y_j=\es$ and all the
  variables from $X^r$ are reset between $i$ and $j$. We look at the
  part of the run $\r$:
  \begin{equation*}
    (q_i,\val_i)\xra{\d_i,t_i}
    (q_{i+1},\val_{i+1})\xra{\d_{i+1},t_{i+1}} \dots(q_j,\val_j)
  \end{equation*}
  and claim that for every $\zeta\in \Rpos$ the sequence of the form
  \begin{equation*}
    (q_i,\val'_i)\xra{\d_i,t_i}
    (q_{i+1},\val'_{i+1})\xra{\d_{i+1},t_{i+1}} \dots(q_j,\val'_j)
  \end{equation*}
  is a part of a run of $\Aa$ where $\val'_k$ for $k=i,\dots,j$ satisfy:
  \begin{enumerate}
  \item $\nu'_k(x)= \nu_k(x)+\zeta+1/2$ for all $x\not\in X^r$,

  \item $\nu'_k(x)=\nu_k(x)+1/2$ if $x\in X^r$ and $x$ has not been
    reset between $i$ and $k$.
  \item $\nu'_k(x)=\nu_k(x)$ otherwise, i.e., when $x\in X^r$ and $x$
    has been reset between $i$ and $k$.
  \end{enumerate}
  Before proving this claim, let us explain how to use it to conclude
  the proof.  The claim shows that in $(q_i,\val_i)$ we can pass $1/2$
  units of time and then construct a part of the run of $\Aa$ arriving
  at $(q_j,\val'_j)$ where $\val'_j(x)=\val_j(x)$ for all variables in
  $X^r$, and $\val'_j(x)=\val_j(x)+1/2$ for other variables.  Now, we
  can find $l>j$, so that the pair $(j,l)$ has the same properties as
  $(i,j)$. We can pass $1/2$ units of time in $j$ and repeat the above
  construction getting a longer run that has passed $1/2$ units of
  time twice. This way we construct a run that passes $1/2$ units of
  time infinitely often, hence it is non-Zeno. By the construction it
  passes also infinitely often through accepting nodes.

  It remains to prove the claim. Take a transition
  $(q_k,\val_k)\xra{\d_k,t_k}(q_{k+1},\val_{k+1})$ and show that
  $(q_k,\val'_k)\xra{\d_k,t_k}(q_{k+1},\val'_{k+1})$ is also a
  transition allowed by the automaton. Let $g$ and $R$ be the guard of
  $t_k$ and the reset of $t_k$, respectively.

  First we need to show that $\val'_k+\d_k$ satisfies the guard of
  $t_k$. For this, we need to check if for every variable $x\in X$ the
  constraints in $g$ concerning $x$ are satisfied. We have three
  cases:
  \begin{itemize}
  \item If $x\not\in X^r$ then $x$ is not bounded by the transition
    $t_k$, that means that in $g$ the constraints on $x$ are of the
    form $(x>c)$ or $(x\geq c)$. Since $(\val_k+\d_k)(x)$ satisfies
    these constraints so does $(\val_k'+\d_k)(x)\geq
    (\val_k+\d_k)(x)$.
  \item If $x\in X^r$ and it is reset between $i$ and $k$ then
    $\val_k'(x)=\val_k(x)$ so we are done.
  \item Otherwise, we observe that $x\not\in Y_k$. This is because
    $Y_i=\es$, and then only variables that are reset are added to
    $Y$. Since $x$ is not reset between $i$ and $k$, it cannot be in
    $Y_k$. By definition of transitions in $GZG^a(\Aa)$ this means
    that $g\land (x>0)$ is consistent. We have that $0\leq
    (\val_k+\d_k)(x)< 1/2$ and $1/2\leq (\val_k'+\d_k)(x)< 1$. So
    $\val_k'+\d_k$ satisfies all the constraints in $g$ concerning $x$
    as $\val_k+\d_k$ does.
  \end{itemize}

  This shows that there is a transition
  $(q_k,\val'_k)\xra{\d_k,t_k}(q_{k+1},\val')$ for the uniquely
  determined $\val'=\reset{R}(\val_k'+\d_k)$. It is enough to show
  that $\val'=\val'_{k+1}$. For variables not in $X^r$ it is clear as
  they are not reset. For variables that have been reset between $i$
  and $k$ this is also clear as they have the same values in
  $\val_{k+1}'$ and $\val'$. For the remaining variables, if a
  variable is not reset by the transition $t_k$ then its value is the
  same in $\val'$ and $\val_k'$. If it is reset then its value in
  $\val'$ becomes $0$; but so it is in $\val_{k+1}'$, and so the third
  condition holds. This proves the claim.  \qed
\end{proof}

Finally, we provide an explanation as to why the proposed solution
does not produce an exponential blowup. At first it may seem that we
have gained nothing because when adding arbitrary sets $Y$ we have
automatically caused exponential blowup to the zone graph. We claim
that this is not the case for the part of $GZG^a(\Aa)$ reachable from
the initial node, namely a node with the initial state of $\Aa$,
the zone putting every clock to $0$, and $Y=X$.

We say that a zone \emph{orders clocks} if for every two clocks $x,
y$, the zone implies that at least one of $x\leq y$, or $y\leq x$
holds.

\begin{lemma}\label{lemma:zones order clocks}
  If a node with a zone $Z$ is reachable from the initial node of the
  zone graph $ZG^a(\Aa)$ then $Z$ orders clocks. The same holds for
  $GZG^a(\Aa)$.
\end{lemma}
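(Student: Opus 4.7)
The plan is to proceed by induction on the length of a path in $ZG^a(\Aa)$ from the initial node. The base case is immediate: the initial zone $\{\vali\}$ has all clocks equal to $0$, so both $x\leq y$ and $y\leq x$ hold trivially for every pair $x,y$.

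For the inductive step, I would consider a transition $(q,Z)\xra{t}(q',Z')$ in $ZG^a(\Aa)$ with $t=(q,g,R,q')$. By definition, $Z' = \Approx_M(\tilde Z)$ where $\tilde Z = \reset{R}\{\val+\delta : \val\in Z,\ \delta\in\Rpos,\ \val+\delta\models g\}$. I would track the ordering property through the three elementary operations building $\tilde Z$ from $Z$: time elapse adds the same delay to both clocks of any pair and so preserves every difference $\val(x)-\val(y)$; intersection with $g$ only shrinks the zone; and the reset $\reset{R}$ handles each pair $x,y$ by a simple case split (both in $R$, neither in $R$, or exactly one in $R$). In particular, when exactly one of $x,y$ is reset the orientation may flip, but the pair remains ordered because the reset clock becomes $0$ and is therefore $\leq$ the other.

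The delicate step, and the main obstacle, will be showing that $\Approx_M$ preserves the ordering property. Fix clocks $x,y$ and assume $\tilde Z\models x\leq y$; the goal is $\Approx_M(\tilde Z)\models x\leq y$. Let $H_{xy}=\{\val\in\Rpos^X : \val(x)\leq\val(y)\}$. This half-space is convex, and it is a union of d-regions: assuming $M\geq 1$, the integer $c=0$ lies in $(-M,M)$, so condition $4^d$ of d-region equivalence gives that $\val\regequiv^d_M \val'$ implies $\val(x)\leq\val(y)$ iff $\val'(x)\leq\val'(y)$, and hence each d-region sits entirely inside or entirely outside $H_{xy}$. Thus $H_{xy}$ is a convex union of d-regions containing $\tilde Z$, and since $\Approx_M(\tilde Z)$ is by definition the smallest such set, $\Approx_M(\tilde Z)\subseteq H_{xy}$. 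The degenerate case $M=0$ reduces to a direct check since then every guard has constant $0$.

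The extension to $GZG^a(\Aa)$ is then immediate: the zone component of every reachable node of $GZG^a(\Aa)$ coincides with the zone component of some reachable node of $ZG^a(\Aa)$, since non-$\tau$ transitions in $GZG^a(\Aa)$ update the zone exactly as in $ZG^a(\Aa)$ while $\tau$ transitions leave the zone unchanged.
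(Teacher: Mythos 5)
Your proof is correct and takes essentially the same route as the paper's: induction on the length of the path, the observation that the unabstracted successor zone orders clocks, and then minimality of $\Approx_M$ against the half-space $x\leq y$, which is a convex union of d-regions --- the paper phrases this last step contrapositively, as the contradiction that $Z'\wedge(x_1\leq\dots\leq x_n)$ would be a smaller convex union of d-regions containing the unabstracted zone, but it is the same argument --- followed by the identical reduction of the $GZG^a(\Aa)$ statement to the $ZG^a(\Aa)$ one. The one loose end, $M=0$, is glossed over by the paper as well (its parenthetical ``recall that $M\geq 0$''), and your deferred ``direct check'' would actually fail under the literal open-interval reading of condition $4^d$ (e.g.\ a guard $x>0$ with no reset yields $\Approx_0(\{x=y>0\})=\{x>0\wedge y>0\}$, which does not order clocks); under the standard closed-interval reading, $-M\leq c\leq M$, the constant $c=0$ is always available and your main argument already covers $M=0$ with no separate case.
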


\begin{proof}
  First notice that in the initial zone, all the clocks are equal to
  each other. Now, consider a zone $Z$ that orders clocks. Let
  $(q,Z)\xra{t}(q',Z')$ be a transition of $ZG^a(\Aa)$. This means
  that there exists a transition $(q,Z)\xra{t} (q',Z_1')$ in the
  (unabstracted) zone graph $ZG(\Aa)$ such that $Z'=\Approx_M(Z_1')$.
  Directly from the definition of transitions we have that $Z'_1$
  orders clocks.  It remains to check that, the clock ordering in
  $Z'_1$ is preserved in $Z'=\Approx_M(Z'_1)$. Suppose not, then let
  $x_1\le \dots\le x_n$ be the ordering in $Z_1'$. We get that
  $Z'\wedge (x_1\le \dots\le x_n)$ is a smaller convex union of
  d-regions than $Z'$ that contains $Z_1'$ (recall that $M \ge 0$) --
  a contradiction. For the second statement observe that for every
  node $(q,Z,Y)$ in $GZG^a(\Aa)$, $(q,Z)$ is reachable in $ZG^a(\Aa)$.
  \qed
\end{proof}

Suppose that $Z$ orders clocks. We say that a set of clocks $Y$
\emph{respects the order given by $Z$} if whenever $y\in Y$ and $Z$
implies $x\leq y$ then $x\in Y$. In other words, $Y$ is downward
closed with respect to the ordering constraint in $Z$.

\begin{lemma}\label{lemma:Y respects order}
  If a node $(q,Z,Y)$ is reachable from the initial node of the
  guessing zone graph $GZG^a(\Aa)$ then $Y$ respects the order given
  by $Z$.
\end{lemma}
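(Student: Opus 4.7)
The plan is to proceed by induction on the length of the path from the initial node $(q_0,Z_0,X)$ of $GZG^a(\Aa)$ to $(q,Z,Y)$. For the base case, $Y=X$ contains every clock and so the downward-closure condition holds vacuously.

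For the inductive step, let $(\bar q,\bar Z,\bar Y)\to(q,Z,Y)$ be the last transition. A $\tau$-transition sets $Y\in\{\bar Y,\es\}$ with $Z=\bar Z$, so the property is preserved (trivially when $Y=\es$, and by the inductive hypothesis when $Y=\bar Y$). The interesting case is a letter transition labelled $t=(\bar q,g,R,q)$, for which $Y=\bar Y\cup R$. I would take $y\in\bar Y\cup R$ and suppose $Z\implies x\leq y$, aiming to show $x\in\bar Y\cup R$; the argument splits on whether $y$ is reset.

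Suppose first that $y\in R$. Then $Z\implies y=0$, hence $Z\implies x=0$ as well. If $x\in R$ we are done. Otherwise $x\notin R$; since $\Approx_M$ only enlarges zones, the unabstracted image $Z_1\subseteq Z$ satisfies $Z_1\implies x=0$, which means that every $\val\in\bar Z$ and every $\d\in\Rpos$ enabling $t$ must have $\val(x)=0$ and $\d=0$. The $GZG^a$ definition requires the existence of some $\val,\d$ additionally satisfying $(\val+\d)\sat (X-\bar Y)>0$; if $x\notin\bar Y$, this would force $(\val+\d)(x)>0$, contradicting $(\val+\d)(x)=0$. Hence $x\in\bar Y\subseteq \bar Y\cup R$.

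Suppose instead $y\in\bar Y\setminus R$, and assume $x\notin R$ (else done). For clocks not in $R$, the transition preserves their difference, so the constraint $Z\implies x\leq y$ can be traced back: using $Z_1\subseteq Z$ and the preservation of integer-coefficient difference constraints by $\Approx_M$, together with Lemma~\ref{lemma:zones order clocks} applied to $\bar Z$ (which ensures $\bar Z$ orders $x$ and $y$), one concludes $\bar Z\implies x\leq y$. The inductive hypothesis applied at $(\bar q,\bar Z,\bar Y)$ then yields $x\in\bar Y$. The delicate point, and where I expect the main obstacle, is that the guard $g$ may restrict the valid subzone $\bar Z\cap g^\downarrow$ to a slice where $x=y$ while $\bar Z$ itself allows $y<x$; here one must exploit the diagonal-free structure of $g$ together with the $GZG^a$ existence requirement — essentially the same positivity argument used in the reset case — to recover the desired conclusion $x\in\bar Y$.
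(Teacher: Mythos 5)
Your overall strategy --- induction on the length of the path, with a case split on which of $x,y$ is reset by the last transition --- is exactly the paper's. The base case, the $\tau$-transitions, and the first two cases of a letter transition are correct; in fact your handling of the case $y\in R$, $x\notin R$ (the successor zone forces $x=0$, hence so does the exact successor, hence the transition is a zero-check on $x$, and the requirement $\val+\delta\sat(X-\bar Y)>0$ in the definition of $\GZG$ then forces $x\in\bar Y$) spells out in full an argument that the paper compresses into ``$x$ is checked for $0$ on $t$, hence $x\in Y$''.

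The gap is the case you flag yourself: neither $x$ nor $y$ is reset. To invoke the induction hypothesis you need $\bar Z$ to imply $x\leq y$, but the transition only tells you that the guard-feasible slice of $\bar Z$ satisfies $x\leq y$; when $\bar Z$ orders the two clocks the other way, that slice is the diagonal $x=y$ and nothing lets you conclude that $\bar Z$ implies $x\leq y$. The positivity argument you propose to recycle does not close this, because it only bites when the guard forces $x=0$; a guard may instead force $x$ to a positive value. Concretely, take $\bar Z=(0=y\leq x)$ with $\bar Y=\{y\}$ (reachable by a $\tau$ step to $\es$ followed by a transition resetting $y$), and a transition with guard $x\leq 1\wedge y\geq 1$ and no reset: it is fireable from the valuation $(0,0)$ after a delay of $1$, the condition $(X-\bar Y)>0$ holds since $x=1>0$ at firing time, yet the successor zone is $x=y=1$, which implies $x\leq y$ while $x\notin\bar Y\cup R$. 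So the step you defer would genuinely fail as written. It is worth noting that the paper's own proof makes precisely the same leap --- from ``$\val\in Z$, $\val\sat x\leq y$ and $Z$ orders clocks'' it concludes ``$Z$ implies $x\leq y$'', which does not follow when the witness $\val$ sits on the diagonal of a zone ordered the other way --- so your instinct about where the delicacy lies is accurate; but neither your sketch nor the appeal to the diagonal-free form of $g$ supplies the missing argument, and the case remains unproved.
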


\begin{proof}
  The proof is by induction on the length of a path.  In the initial
  node $(q_0,Z_0,X)$, the set $X$ obviously respects the order as it
  is the set of all clocks. Now take a transition
  $(q,Z,Y)\xra{t}(q',Z',Y')$ with $Y$ respecting the order in $Z$. We
  need to show that $Y'$ respects the order in $Z'$. By the definition
  of transitions in $GZG^a(\Aa)$ there are $\val\in Z$, $\val'\in Z'$
  and $\d\in\Rpos$ such that $(q,\val)\xra{\d,t}(q',\val')$ and
  $\val+\d\sat(X-Y)>0$. Take $y\in Y'$ and suppose that $Z'$ implies
  $x\leq y$ for some clock $x$. There are three cases depending on
  which of the variables $y$, $x$ are being reset by the transition.

  \begin{itemize}
  \item If $x$ is reset by the transition then, by definition $x\in
    Y'$.
  \item If $y$ is reset then $Z'$ implies $y=0$. Hence $Z'$ implies
    that $x=0$. When $x$ is not reset, $x$ is checked for $0$ on
    $t$. Hence, $x\in Y$ and $x\in Y'$.
  \item The remaining case is when none of the two variables is reset by
    the transition. As $\val'\in Z'$, we have that $\val'\sat x\leq
    y$; and in consequence $\val\sat x\leq y$. Since $Z$ orders clocks
    and $\val\in Z$, we must have that $Z$ implies $x\leq y$.  As $y$
    has not been reset, $y\in Y$. By assumption that $Y$ orders
    clocks, $x\in Y$.
  \end{itemize}
  \qed
\end{proof}

The above two lemmas give us the desired bound.

\begin{theorem}\label{thm:GZG size}
  Let $|ZG^a(\Aa)|$ be the size of the zone graph, and $|X|$ be the
  number of clocks in $\Aa$.  The number of reachable nodes of
  $GZG^a(\Aa)$ is bounded by $|ZG^a(\Aa)|.(|X|+1)$.
\end{theorem}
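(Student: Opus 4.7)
The plan is to bound the number of distinct third components $Y$ that can accompany each reachable node $(q,Z)$ of $ZG^a(\Aa)$, and then multiply by $|ZG^a(\Aa)|$. By Lemma~\ref{lemma:Y respects order}, if $(q,Z,Y)$ is reachable, then $Y$ is downward closed with respect to the order on clocks that $Z$ implies. By Lemma~\ref{lemma:zones order clocks}, $Z$ already orders the clocks: for any two clocks $x,y$, at least one of $x\leq y$ or $y\leq x$ holds in $Z$. Thus the binary relation on $X$ defined by ``$Z$ implies $x\leq y$'' is a total preorder.

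The first step is to observe that in a total preorder on $|X|$ elements, the downward closed subsets are in bijection with the initial segments: enumerate the equivalence classes (clocks forced to be equal by $Z$) in the order induced by $Z$ as $C_1 < C_2 < \dots < C_m$ with $m\leq |X|$, and note that any downward closed $Y$ must be of the form $C_1\cup \cdots \cup C_k$ for some $0\leq k\leq m$. Hence the number of admissible $Y$ sets is at most $m+1 \leq |X|+1$.

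The second step is to combine this counting with the fact that every reachable node $(q,Z,Y)$ of $GZG^a(\Aa)$ projects onto a reachable node $(q,Z)$ of $ZG^a(\Aa)$: this projection is immediate from the definition of the transitions in $GZG^a(\Aa)$, since $\tau$-transitions fix $(q,Z)$ and non-$\tau$-transitions are defined to lie above genuine transitions of $ZG^a(\Aa)$. Therefore the total number of reachable nodes of $GZG^a(\Aa)$ is at most $|ZG^a(\Aa)|\cdot(|X|+1)$.

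There is no real obstacle here; the work is entirely carried by Lemmas~\ref{lemma:zones order clocks} and~\ref{lemma:Y respects order}. The only point that deserves a sentence of care is the handling of equalities: if $Z$ forces $x=y$, then $Z$ implies both $x\leq y$ and $y\leq x$, so the ``respects the order'' condition forces $x$ and $y$ to be jointly in or jointly out of $Y$. This is why the enumeration is over equivalence classes of the preorder rather than individual clocks, and why the bound $|X|+1$ (not $2^{|X|}$) is tight against the worst case $m=|X|$.
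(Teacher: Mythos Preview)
Your proof is correct and follows exactly the approach the paper intends: the paper's own proof is a single sentence stating that the theorem follows directly from Lemmas~\ref{lemma:zones order clocks} and~\ref{lemma:Y respects order}, and you have simply spelled out the implicit counting argument (downward-closed sets of a total preorder on $|X|$ elements number at most $|X|+1$) and the projection of reachable $GZG^a$ nodes onto reachable $ZG^a$ nodes. Your care with equivalence classes is a nice touch that the paper leaves tacit.
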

The theorem follows directly from the above two lemmas. Of course,
imposing that zones have ordered clocks in the definition of
$GZG^a(\Aa)$ we would get the same bound for the entire $GZG^a(\Aa)$.

\subsection{Examples of guessing zone graphs}

Figure~\ref{fig:A1_GZG} in Section
\ref{sec:non_zeno:more_efficient_solution} depicts a TBA $\Aa_1$ along
with $ZG^a(\Aa_1)$ and $GZG^a(\Aa_1)$ (where the $\tau$-loops have
been omitted). In order to fire transition $b\xra{x\leq 1,\{x\}} a$
time must not elapse in $b$. The third component $Y$ does not help to
detect that time cannot elapse in $b$ as in $GZG^a(\Aa_1)$ the
transition is allowed for both $Y=\{x\}$ and $Y=\emptyset$. However,
as soon as a strongly-connected component (SCC) contains a transition
$x\geq 1$ and a transition that resets $x$, it has a non-Zeno run, and
the third component does not play any role.

The third component is only useful for the case where an SCC contains
no transition with a guard implying $x>0$ for some clock $x$ that is
also reset on some transition in the SCC. In such a case, zero-checks
may prevent time to elapse. We illustrate this case on the next two
examples that emphasize how the third component added to the states of
the zone graph allows to distinguish between Zeno runs and non-Zeno
runs.

\begin{figure}[p]
  \centering
  \begin{tikzpicture}[shorten >=1pt,node distance=2cm,on
    grid,auto,initial text=,initial where=above]
    \node[state,initial] (0) {$0$};
    \node[state] (1) [left=of 0] {$1$};
    \node[state] (2) [right=of 0] {$2$};
    \draw[->] (0) edge [bend left] node {$\{x\}$} (1);
    \draw[->] (0) edge [bend left] node {$\{y\}$} (2);
    \draw[->] (1) edge [bend left] node {$y=0$} (0);
    \draw[->] (2) edge [bend left] node {$x=0$} (0);
    \node[node distance=1.5cm] (figname) [left=of 1]
    {$\mathbf{\Aa_2}$};
  \end{tikzpicture}

  \vspace{0.5cm}
  \begin{tikzpicture}[initial text=,initial where=right]
    \begin{scope}
      \tikzstyle{every node}=[draw,rounded corners=2mm]
      \node (z2) at (0,0) {{\footnotesize $z2:(1,0=x\leq
          y),\emptyset$}};
      \node (z2x) at (4,0) {{\footnotesize $z2,\{x\}$}};
      \node (z2xy) at (8,0) {{\footnotesize $z2,\{x,y\}$}};
      \node (z1) at (0,-1.5) {{\footnotesize
          $z1:(0,0=x=y),\emptyset$}};
      \node (z1xy) at (8,-1.5) {{\footnotesize $z1,\{x,y\}$}};
      \node (z3) at (0,-3) {{\footnotesize $z3:(2,0=y\leq
          x),\emptyset$}};
      \node (z3y) at (4,-3) {{\footnotesize $z3,\{y\}$}};
      \node (z3xy) at (8,-3) {{\footnotesize $z3,\{x,y\}$}};
    \end{scope}
    \begin{scope}[->,line width=1pt]
      \draw (9,-1.5) -- (z1xy);
      \draw (z1) -- node[above] {{\footnotesize $\{x\}$}} (z2x);
      \draw (z1) -- node[below] {{\footnotesize $\{y\}$}} (z3y);
      \draw (z1xy) edge [bend left] node [left] {{\footnotesize
          $\{x\}$}} (z2xy);
      \draw (z1xy) edge [bend left] node [right] {{\footnotesize
          $\{y\}$}} (z3xy);
      \draw (z2xy) edge [bend left] node [right] {{\footnotesize
          $y=0$}} (z1xy);
      \draw (z3xy) edge [bend left] node [left] {{\footnotesize
          $x=0$}} (z1xy);
      \begin{scope}[dashed]
        \draw (z1xy) -- node[above] {{\footnotesize $\tau$}} (z1);
        \draw (z2x) -- node[above] {{\footnotesize $\tau$}} (z2);
        \draw (z2xy) .. controls +(135:1cm) and +(45:1cm)
        .. node[above] {{\footnotesize $\tau$}} (z2);
        \draw (z3y) -- node[below] {{\footnotesize $\tau$}} (z3);
        \draw (z3xy) .. controls +(225:1cm) and +(315:1cm)
        .. node[below] {{\footnotesize $\tau$}} (z3);
      \end{scope}
    \end{scope}
    \node[node distance=1cm] (figname) [below=of z3y]
    {\textbf{Reachable part of} $\mathbf{GZG^a(\Aa_2)}$};
  \end{tikzpicture}

  \vspace{1cm}

  \begin{tikzpicture}[shorten >=1pt,node distance=2cm,on
    grid,auto,initial text=,initial where=above]
    \node[state,initial] (0) {$0$};
    \node[state] (1) [left=of 0] {$1$};
    \node[state] (2) [right=of 0] {$2$};
    \draw[->] (0) edge [bend left] node {$\{x\}$} (1);
    \draw[->] (0) edge [bend left] node {$x=0$} (2);
    \draw[->] (1) edge [bend left] node {$y=0$} (0);
    \draw[->] (2) edge [bend left] node {$\{y\}$} (0);
    \node[node distance=1.5cm] (figname) [left=of 1] {$\mathbf{\Aa_3}$};
  \end{tikzpicture}

  \vspace{0.5cm}
  \begin{tikzpicture}[initial text=,initial where=right]
    \begin{scope}
      \tikzstyle{every node}=[draw,fill=white,rounded corners=2mm]
      \node (z2) at (0,0) {{\footnotesize $z2:(2,0=x=y),\emptyset$}};
      \node (z2xy) at (8,0) {{\footnotesize $z2,\{x,y\}$}};
      \node (z3) at (0,-1.7) {{\footnotesize $z3:(0,0=y\leq
          x),\emptyset$}};
      \node (z3y) at (4,-1.7) {{\footnotesize $z3,\{y\}$}};
      \node (z3xy) at (8,-1.7) {{\footnotesize $z3,\{x,y\}$}};
      \node (z4) at (0,-3.4) {{\footnotesize $z4:(1,0=x\leq
          y),\emptyset$}};
      \node (z4x) at (4,-3.4) {{\footnotesize $z4,\{x\}$}};
      \node (z4xy) at (8,-3.4) {{\footnotesize $z4,\{x,y\}$}};
      \node (z1) at (0,-5.1) {{\footnotesize $z1:(0,0=x=y),\emptyset$}};
      \node (z1xy) at (8,-5.1) {{\footnotesize $z1,\{x,y\}$}};
    \end{scope}
    \begin{scope}[->,line width=1pt]
      \draw (9,-5.1) -- (z1xy);
      \draw (z1) -- node[above] {{\footnotesize $\{x\}$}} (z4x);
      \draw (z1xy) .. controls +(15:1.5cm) and +(345:1.5cm)
      .. node[right] {{\footnotesize $x=0$}} (z2xy);
      \draw (z1xy) -- node[right] {{\footnotesize $\{x\}$}} (z4xy);
      \draw (z2) -- node[above] {{\footnotesize $\{y\}$}} (z3y);
      \draw (z2xy) .. controls +(250:0.8cm) and +(110:0.8cm)
      .. node[left] {{\footnotesize $\{y\}$}} (z3xy);
      \draw (z3) -- node[above] {{\footnotesize $\{x\}$}} (z4x);
      \draw (z3y) -- node[above] {{\footnotesize $\{x\}$}} (z4xy);
      \draw (z3xy) -- node[right] {{\footnotesize $x=0$}} (z2xy);
      \draw (z3xy) -- node[left] {{\footnotesize $\{x\}$}} (z4xy);
      \draw (z4xy) .. controls +(250:0.8cm) and +(110:0.8cm)
      .. node[left] {{\footnotesize $y=0$}} (z1xy);
      \begin{scope}[dashed]
        \draw (z1xy) -- node[below] {{\footnotesize $\tau$}} (z1);
        \draw (z2xy) -- node[above] {{\footnotesize $\tau$}} (z2);
        \draw (z3y) -- node[above] {{\footnotesize $\tau$}} (z3);
        \draw (z3xy) .. controls +(135:1cm) and +(45:1cm) .. node[above]
        {{\footnotesize $\tau$}} (z3);
        \draw (z4x) -- node[above] {{\footnotesize $\tau$}} (z4);
        \draw (z4xy) .. controls +(135:1cm) and +(45:1cm) .. node[above]
        {{\footnotesize $\tau$}} (z4);
      \end{scope}
    \end{scope}
    \node[node distance=2cm] (figname) [below=of z4x] {\textbf{Reachable
        part of} $\mathbf{GZG^a(\Aa_3)}$};
  \end{tikzpicture}
  \caption{Examples of guessing zone graphs ($\tau$ self-loops have been
    omitted for clarity)}
  \label{fig:gzg-eg}
\end{figure}

The TBA $\Aa_2$ shown in Figure \ref{fig:gzg-eg} has only runs where
the time cannot elapse at all. This is detected in $GZG^a(\Aa_2)$ as all
states in the only non-trivial SCC have $Y=\{x,y\}$ as the third
component. This means that from every state there exists a reachable
zero-check that is not preceded by the corresponding reset, hence
preventing time to elapse. Notice that the correctness of this
argument relies on the fact that for every $(q,Z,Y)$ in
$GZG^a(\Aa_2)$, and for every transition $t=(q,g,R,q')$, even if $t$
is fireable in $ZG^a(\Aa_2)$ from $(q,Z)$, it must also be fireable
\emph{under the supplementary hypothesis} $(X-Y)>0$ given by $Y$ in
$GZG^a(\Aa_2)$.

The TBA $\Aa_3$ in Figure \ref{fig:gzg-eg} admits a non-Zeno run. This
can be read from $GZG^a(\Aa_3)$ since the SCC composed of the four
zones with $Y=\{x,y\}$ together with $(z_2,\emptyset)$ and
$(z_3,\{y\})$ contains a clear node. This is precisely the state where
time can elapse as every reachable zero-check is preceded by the
corresponding reset.


\section{Algorithm}
\label{sec:algo}

In this section, we provide an on-the-fly algorithm for the B\"{u}chi
non-emptiness problem using the guessing zone graph construction
developed in Section~\ref{sec:non_zeno:more_efficient_solution}. In
the later part of the section, we observe that in most cases,
non-Zenoness could be detected directly from the standard zone graph,
without extra construction. We provide an optimized on-the-fly
algorithm taking into account these observations.

We will use Theorem~\ref{thm:zeno_on_zero_guessing} to algorithmically
check if an automaton $\Aa$ has a non-Zeno run satisfying the B\"uchi
condition. The theorem requires to find an unblocked path in
$GZG^a(\Aa)$ visiting both an accepting node and a clear node
infinitely often. This problem is similar to that of testing for
emptiness of automata with generalized Büchi conditions as we need to
satisfy two infinitary conditions at the same time. The requirement of
a path being unblocked adds additional complexity to the problem. The
best algorithms for testing emptiness of automata with generalized
Büchi conditions are based on Tarjan's algorithm for strongly
connected components (SCC)~\cite{SE:TACAS:2005,GS:MEMICS:2009}. So
this is the way we take here. In particular, we adopt the variant
given by Couvreur~\cite{couvreur1999fly,Couvreur:SPIN:2005}.

In general, the verification problem for timed systems involves
checking if a network of timed automata $\Aa_1,\dots,\Aa_n$ satisfies
a given property $\phi$. Assuming that $\phi$ can be translated into a
(timed) Büchi automaton $\Aa_{\neg\phi}$, we reduce the verification
problem to the emptiness of a timed Büchi automaton $\Aa$ defined as a
product $\Aa_1\times \Aa_2\times \cdots\times \Aa_n\times \Aa_{\neg
  \phi}$ for some synchronization policy. Couvreur's algorithm is an
extension of Tarjan's algorithm for computing maximal SCCs in a
graph. One of its main features is that it stops as soon as a (non
necessarily maximal) SCC with an accepting state has been found. In
addition, it handles multiple accepting conditions efficiently. To
this regard, the algorithm computes the set of accepting conditions in
each SCC of $\Aa$. Initially, each state $s$ in $\Aa$ is considered as
a trivial SCC labelled with the accepting conditions of $s$. The
algorithm computes the states of $\Aa$ on-the-fly in a depth-first
search (DFS) manner starting from the initial state. During the
search, when a cycle is found, all the SCCs in the cycle are merged
into a bigger SCC $\G$ that inherits their accepting conditions. If
$\G$ contains all the required accepting conditions, the algorithm
stops declaring $\Aa$ to be not empty. Notice that $\G$ need not be
maximal. Otherwise it resumes the DFS on $\Aa$. We direct the reader
to~\cite{couvreur1999fly,Couvreur:SPIN:2005,atva2010} for further
details on the Couvreur's algorithm.

In the next section, we show how to enhance Couvreur's algorithm
to detect runs that are not only accepting but also non-Zeno. It is
achieved by associating extra information to the SCCs in $\Aa$. This
information is updated when SCCs are merged like for accepting
conditions.

\subsection{Emptiness check on $\GZG(\Aa)$}
\label{sec:algo:gzg}

We apply Couvreur's algorithm for detecting maximal SCCs in
$GZG^a(\Aa)$. During the computation of the maximal SCCs, we keep
track of whether an accepting node and a clear node have been
seen. For the unblocked condition we use two sets of clocks $U_\G$ and
$R_\G$ that respectively contain the clocks that are bounded and the
clocks that are reset in the SCC $\G$. A clock from $U_\G-R_\G$ is
called \emph{blocking} since being bounded and not reset it puts a
limit on the time that can pass. At the end of the exploration of $\G$
we check if:
\begin{enumerate}
\item\label{condi} we have passed through an accepting node and a
  clear node,
\item\label{condii} there are no blocking clocks: $U_\G\subseteq
  R_\G$.
\end{enumerate}
If the two conditions are satisfied then we can conclude saying that
$\Aa$ has an accepting non-Zeno run. Indeed, a path passing infinitely
often through all the nodes of $\G$ would satisfy the conditions of
Theorem~\ref{thm:zeno_on_zero_guessing}, giving a required run of
$\Aa$. If the first condition does not hold then the same theorem says
that $\G$ does not have a witness for a non-Zeno run of $\Aa$
satisfying the Büchi condition.

The interesting case is when the first condition holds but not the
second. The following lemma yields an algorithm in that case.

\begin{lemma}
  Let $\G$ be an SCC in $\GZG(\Aa)$ with an accepting node and a clear
  node, and such that $U_\G\not\subseteq R_\G$. There exists an
  unblocked path in $\G$ that visits both an accepting node and a
  clear node infinitely often iff there exists a sub-SCC $\G'\subseteq
  \G$ with an accepting node and a clear node and such that
  $U_{\G'}\subseteq R_{\G'}$.
\end{lemma}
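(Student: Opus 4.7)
My plan is to prove the two implications separately. For the ($\Leftarrow$) direction, given a sub-SCC $\G'$ with an accepting node and a clear node and $U_{\G'}\subseteq R_{\G'}$, I would construct an infinite path that stays in $\G'$ and fires every transition of $\G'$ infinitely often (such a path exists by strong connectivity: fix any closed tour through all transitions of $\G'$ and repeat it indefinitely). This path automatically visits the accepting and clear nodes of $\G'$ infinitely often. For unblockedness, every clock $x$ bounded infinitely often by the path is bounded by some transition of $\G'$, so $x\in U_{\G'}\subseteq R_{\G'}$; hence some transition of $\G'$ resets $x$, and that transition is fired infinitely often, so $x$ is reset infinitely often too.

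For the ($\Rightarrow$) direction, let $\pi$ be an unblocked path in $\G$ that visits an accepting node and a clear node infinitely often. The key observation is that every clock $x\in U_\G\setminus R_\G$ is never reset anywhere in $\G$, hence never reset along $\pi$; since $\pi$ is unblocked, $x$ cannot be bounded infinitely often on $\pi$ either. So from some index onwards, $\pi$ only fires transitions that do not bound any clock of $U_\G\setminus R_\G$. I would then take $\G'$ to consist of the nodes visited infinitely often by $\pi$ together with the transitions fired infinitely often by $\pi$. The subgraph $\G'$ is strongly connected because the tail of $\pi$ keeps oscillating between its nodes using precisely its edges; it contains an accepting node and a clear node inherited from $\pi$; and any clock bounded by a transition of $\G'$ is bounded infinitely often on $\pi$, so by unblockedness it is also reset infinitely often on $\pi$, hence reset by some transition of $\G'$, giving $U_{\G'}\subseteq R_{\G'}$.

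The main subtlety I anticipate is the precise meaning of ``sub-SCC $\G'\subseteq\G$''. The clean argument above treats $\G'$ as a strongly connected \emph{subgraph} carrying its own set of edges, which fits the intended algorithmic reading where one iteratively discards transitions bounding clocks of $U_\G\setminus R_\G$ and then inspects the SCCs of the residual graph. If ``sub-SCC'' instead means the subgraph of $\G$ induced by a subset of nodes (keeping all edges of $\G$ between them), the ($\Rightarrow$) direction requires an extra iteration step: first restrict to the nodes visited infinitely often; if the induced subgraph still carries a blocking clock, delete the edges bounding such clocks and take an SCC of what remains that still meets both the accepting and the clear nodes, and repeat. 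Termination is immediate since each pass strictly reduces either the node set or the set of blocking clocks, and at termination $U_{\G'}\subseteq R_{\G'}$ holds by construction.
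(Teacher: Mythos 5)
Your proof is correct and follows essentially the same route as the paper: for the forward direction you take $\G'$ to be the nodes and edges visited infinitely often by the unblocked path (exactly the paper's construction), and for the converse you run a tour through all transitions of $\G'$ infinitely often and check unblockedness from $U_{\G'}\subseteq R_{\G'}$. Your extra verifications (strong connectivity of the infinitely-visited part, the transfer of boundedness/reset information) are details the paper leaves implicit, and your reading of ``sub-SCC'' as a strongly connected subgraph with its own edge set is the intended one.
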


\begin{proof}
  Assume that $\G$ has an unblocked path that visits both an accepting
  node and a clear node infinitely often. Then, define $\G'$ as the
  set of nodes and edges that are visited infinitely often on that
  path.

  Conversely, if such a sub-SCC $\G'$ exists, then consider an
  infinite path in $\G'$ that goes infinitely often through each node
  and each transition in $\G'$. This path is unblocked and visits both
  an accepting node and a clear node. This path is also a path in
  $\G$.  \qed
\end{proof}

We call \emph{blocking edges} all the edges in $\G$ that bound a clock
from $U_\G\setminus R_\G$. We proceed as follows. We discard all the
blocking edges from $\G$ as every unblocked path in $\G$ goes only
finitely many times through these edges. In general, this yields
several candidates for $\G'$. Each of them is a proper sub-SCC of
$\G$. Then, we restart our algorithm on each such $\G'$. Since we have
discarded some edges from $\G$ (hence some resets), a clock may be now
blocking in $\G'$. If this is the case, the blocking edges in $\G'$
will be discarded, and the resulting sub-SCCs of $\G'$ will be
explored, and so on. Observe that each transition in $GZG^a(\Aa)$ will
be visited at most $|X|+1$ times, as we eliminate at least one clock
at each restart. If after exploring the entire graph, the algorithm
has not found a subgraph satisfying the two conditions then it
declares that there is no run of $\Aa$ with the desired
properties. The correctness of the procedure is based on
Theorem~\ref{thm:zeno_on_zero_guessing}. All the procedure: exploring
$\G$, discarding blocking edges, exploring all $\G'$ candidates, etc,
can be done on-the-fly without storing $\G$ as described
in~\cite{atva2010}.

Recall that by Theorem~\ref{thm:GZG size} the size of $GZG^a(\Aa)$ is
$\Oo(|ZG^a(\Aa)|\cdot|X|)$. The complexity of the algorithm follows
from the linear complexity of Couvreur's algorithm and the remark
about the bound on the number of times each transition is visited. We
hence obtain the following.

\begin{theorem}
  The above algorithm is correct and runs in time
  $\Oo(|ZG^a(\Aa)|\cdot |X|^2)$.
\end{theorem}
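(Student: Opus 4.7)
The plan is to split the argument into correctness and complexity, both of which largely follow by stringing together previously established results.

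For correctness, I would argue equivalence between what the algorithm reports and the existence of a non-Zeno accepting run. By Theorem~\ref{thm:zeno_on_zero_guessing}, such a run exists iff $GZG^a(\Aa)$ admits an unblocked path visiting an accepting node and a clear node infinitely often. Any such path must, from some point on, stay inside a single SCC $\Gamma'$ which then contains an accepting and a clear node and whose bounded-but-not-reset set is empty (i.e. $U_{\Gamma'}\subseteq R_{\Gamma'}$). Conversely, given such a $\Gamma'$, a path that fairly visits every edge of $\Gamma'$ is unblocked and hits accepting and clear nodes infinitely often, giving the required run. Thus the task of the algorithm is exactly to find such a sub-SCC. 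I would then appeal to the lemma immediately preceding the theorem: any maximal SCC $\Gamma$ with blocking clocks either contains a sub-SCC satisfying the two conditions, or none of its sub-SCCs do. Since blocking edges in $\Gamma$ can only be traversed finitely often on any unblocked infinite path, discarding them and recursing on the resulting sub-SCCs is sound and complete. Couvreur's algorithm on $GZG^a(\Aa)$ finds and reports any SCC meeting both conditions as soon as it is discovered, and the recursive restart on candidate sub-SCCs inherits this guarantee.

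For the complexity bound, the size of $GZG^a(\Aa)$ is $\Oo(|ZG^a(\Aa)|\cdot|X|)$ by Theorem~\ref{thm:GZG size}, and Couvreur's algorithm is linear in the graph it is applied to. The key observation is that the recursive restarts cannot blow up the total work beyond an extra factor of $|X|$: each restart is triggered only after at least one clock has been identified as blocking in the current SCC, and the blocking edges of that clock are then discarded before recursing. Hence along any chain of nested restarts the set of blocking clocks strictly grows, so the chain length is at most $|X|+1$, and every transition of $GZG^a(\Aa)$ is visited at most $|X|+1$ times across the whole run of the algorithm. Multiplying gives total time $\Oo(|GZG^a(\Aa)|\cdot|X|)=\Oo(|ZG^a(\Aa)|\cdot|X|^2)$.

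The main obstacle I expect is the bookkeeping for the restart mechanism, namely verifying that on-the-fly maintenance of $U_\Gamma$ and $R_\Gamma$ under SCC merges, together with discarding blocking edges upon closure of an SCC, is faithfully captured by the extension of Couvreur's algorithm from~\cite{atva2010}. Once one accepts that framework, both the correctness argument and the $|X|+1$ bound on re-traversals of each transition become straightforward; the rest is assembling the ingredients. I would therefore devote the bulk of the proof to making the restart invariant precise (which clocks are considered blocking, and how the sub-SCCs are enumerated on-the-fly), and keep the equivalence with Theorem~\ref{thm:zeno_on_zero_guessing} and the Theorem~\ref{thm:GZG size} size bound as short citations.
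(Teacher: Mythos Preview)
Your proposal is correct and follows essentially the same approach as the paper: correctness is reduced to Theorem~\ref{thm:zeno_on_zero_guessing} together with the preceding lemma on sub-SCCs, and the complexity bound is obtained by combining Theorem~\ref{thm:GZG size} with the observation that each transition of $GZG^a(\Aa)$ is visited at most $|X|+1$ times because every restart eliminates at least one clock. The paper's own argument is in fact just the two short paragraphs preceding the theorem statement, so your write-up is a faithful expansion of the same reasoning (one minor wording issue: it is the set of \emph{eliminated} clocks that strictly grows along a chain of restarts, not the set of currently blocking clocks, but your intent is clear).
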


Although the guessing zone graph provides a way to detect non-Zeno
paths, it is useful only when the automaton indeed contains
zero-checks. The next challenge therefore lies in optimizing the use
of the guessing zone graph construction, that is, applying 
Couvreur's algorithm directly on the standard zone graph and using the
guessing zone graph construction only when required.

\subsection{Optimized use of guessing zone graph construction}
\label{sec:algo:optimized}

The idea is to apply Couvreur's algorithm directly on $\ZG(\Aa)$ and
find an SCC with an accepting node. An SCC is said to be
\emph{unblocked} if it contains no blocking clock; recall that it is a
clock $x$ that is checked for a guard which implies $x \le c$ for a
constant $c$ and that is reset in no transition of the SCC. An SCC is said to be
\emph{strongly non-Zeno} if it contains a clock $x$ that is both reset on a
transition of the SCC, and checked in a guard which implies $x \ge 1$ in the
SCC.

Non-Zenoness can be ensured if the SCC satisfies one of the following
conditions:
\begin{itemize}
\item It is unblocked and free from zero-checks. A zero-check is
  detected for a transition $(q,Z)\xra{g,R}(q',Z')$ and some clock $x$
  when for each $\val\in Z$ and $\d\in\Rpos$ such that $\val+\d\sat
  g$, we have $(\val+\d)(x)=0$.
\item The SCC is strongly non-Zeno: there is a clock $x$ that is reset in the
SCC and one of the transitions in the SCC implies $x \ge 1$.
\end{itemize}

For the second condition, note that such a reachable SCC instantiates
into a path $\rho$ of $\Aa$ whose suffix corresponds to repeated
traversal of this SCC. Every traversal resets $x$ and checks for a
guard that implies $x \ge 1$. Therefore, at least $1$ time unit
elapses in each traversal, implying that $\rho$ is a non-Zeno
run. Notice that this relies on the same principle as the one used
in the Strongly Non-Zeno construction~\cite{TYB:FMSD:2005} (see
Section~\ref{sec:non_zeno:adding_one_clock}). However, in our case we
exploit the information from $\Aa$: we do not add any new clock. Our
algorithm will compute on the fly the set $L_\G$ of clocks $x$ such
that $x\ge 1$ is implied by some guard in $\G$. This is done in the
same way as for $U_\G$ in the previous subsection. Then, $\G$
satisfies the second condition above if $L_\G\cap R_\G$ is not empty.

The first condition is justified by the following lemma.

\begin{lemma}
  If $\ZG(\Aa)$ has an unblocked path that visits an accepting node
  infinitely often, and has only finitely many transitions with
  zero-checks, then $\Aa$ has a non-Zeno run satisfying the B\"{u}chi
  condition.
\end{lemma}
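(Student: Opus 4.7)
My plan is to reduce the claim to Theorem~\ref{thm:zeno_on_zero_guessing}. I will lift the given path $\sigma$ in $ZG^a(\Aa)$ to a path $\sigma'$ in $GZG^a(\Aa)$ that is unblocked and visits both an accepting node and a clear node infinitely often; the theorem then produces the desired non-Zeno accepting run of $\Aa$.

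Write $\sigma=(q_0,Z_0)\xra{t_0}(q_1,Z_1)\xra{t_1}\cdots$ and fix $N$ such that no $t_i$ with $i\geq N$ is a zero-check. For $i<N$ I lift each step to $(q_i,Z_i,X)\xra{t_i}(q_{i+1},Z_{i+1},X)$: this is legal in $GZG^a(\Aa)$ because with $Y=X$ the realizability condition $\val+\delta\models(X-Y)>0$ is vacuous, while the $ZG^a$-transition itself supplies an instantiation $(q_i,\val)\xra{\delta,t_i}(q_{i+1},\val')$. At step $N$ I insert the $\tau$-transition $(q_N,Z_N,X)\xra{\tau}(q_N,Z_N,\emptyset)$, and thereafter alternate
\begin{equation*}
  (q_i,Z_i,\emptyset)\xra{t_i}(q_{i+1},Z_{i+1},R_i)\xra{\tau}(q_{i+1},Z_{i+1},\emptyset),
\end{equation*}
where $R_i$ is the reset set of $t_i$, so that every other node in the suffix is clear.

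The main delicate point is to verify that each $t_i$ with $i\geq N$ is actually realizable from the clear node $(q_i,Z_i,\emptyset)$, i.e.\ that there exist $\val\in Z_i$ and $\delta\in\Rpos$ with $\val+\delta\models g_i$ and $(\val+\delta)(x)>0$ for every clock $x$. The set $S$ of pairs $(\val,\delta)$ with $\val\in Z_i$ and $\val+\delta\models g_i$ is a non-empty convex polyhedron, since $Z_i$ is a zone and $g_i$ is a conjunction of integer-bounded inequalities on single clocks, and each coordinate function $f_x(\val,\delta)=\val(x)+\delta$ is a non-negative affine map on $S$. Because $t_i$ is not a zero-check, no $f_x$ vanishes identically on $S$, so for each clock $x$ we may pick a witness $s_x\in S$ with $f_x(s_x)>0$; the uniform convex combination $\tfrac{1}{|X|}\sum_{x\in X} s_x$ still lies in $S$ and, by non-negativity of every $f_y$, satisfies $f_y>0$ for all $y\in X$, yielding the required realization.

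It remains to check that $\sigma'$ meets the hypotheses of Theorem~\ref{thm:zeno_on_zero_guessing}. The inserted $\tau$-steps neither bound nor reset any clock, and the $t_i$-transitions embedded in $\sigma'$ carry exactly the same guards and resets as those in $\sigma$, so the set of clocks that are bounded infinitely often but reset only finitely often is identical along $\sigma$ and $\sigma'$; hence $\sigma'$ is unblocked. Since the state component is preserved by the lift and by each $\tau$, accepting nodes appear infinitely often on $\sigma'$; and clear nodes appear infinitely often by construction of the suffix. Theorem~\ref{thm:zeno_on_zero_guessing} then delivers the non-Zeno accepting run of $\Aa$ we want.
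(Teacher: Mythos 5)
Your proof is correct and follows essentially the same route as the paper's: lift the path into $GZG^a(\Aa)$ keeping $Y=X$ on the prefix, then switch to $Y=\emptyset$ interleaved with $\tau$-steps on the zero-check-free suffix, and invoke Theorem~\ref{thm:zeno_on_zero_guessing}. The only place you go beyond the paper is the convexity argument showing that a transition that is not a zero-check admits a \emph{single} realization with all clocks simultaneously positive (rather than one witness per clock) — the paper asserts this step without justification, and your argument is a correct way to fill it in.
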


\begin{proof}
  Let $\s$ be the path in $\ZG(\Aa)$ as required by the assumptions of
  the lemma:

  \begin{equation*}
    (q_0, Z_0) \xrightarrow{t_0} \dots (q_i, Z_i) \xrightarrow{t_i} \dots 
  \end{equation*}

  Since zero-checks occur only finitely often in $\s$, we can find $j$
  such that the suffix $(q_j, Z_j) \xrightarrow{t_j} \dots $ of $\s$
  contains no zero-checks in its transitions. Let $\s'$ be the
  following sequence:

  \begin{equation*}
    (q_0,Z_0,Y_0)\xrightarrow{\tau}
    (q_0,Z_0,Y_0')\xrightarrow{t_0}
    (q_1,Z_1,Y_1)\xrightarrow{\tau}
    (q_1,Z_1,Y_1')\xrightarrow{t_1}
    \cdots
  \end{equation*}
  
  where $Y_0=X$, $Y_i$ is determined by the transition, and $Y'_i=Y_i$
  for all $i \le j$ and for $i>j$, $Y_i' = \es$. Note that $\s'$ is a
  path in $\GZG(\Aa)$. For this to be true, each transition $(q_i,
  Z_i, Y_i') \xra{t_i} (q_{i+1}, Z_{i+1}, Y_{i+1})$ should be
  realizable from a valuation $\nu_i$ such that $\nu_i \sat (X - Y_i')
  > 0$. This is vacuously true if $i \le j$ since $Y'_i = X$ for all
  $i \le j$. For $i > j$, $Y'_i = \es$ and since $t_i$ does not
  contain a zero-check, the transition is realizable from a valuation
  $\nu_i$ in which all clocks are strictly greater than $0$.

  Since $\s$ is unblocked, $\s'$ is unblocked too. By definition all
  but finitely many nodes for $\s'$ are clear. Finally, $\s'$ visits
  an accepting node infinitely often. By Theorem
  \ref{thm:zeno_on_zero_guessing}, $\Aa$ has a non-Zeno run satisfying
  the B\"{u}chi condition.  \qed
\end{proof}

The above two observations give a sufficient condition for terminating
with a success when an SCC $\G$ with an accepting node is found in
$\ZG(\Aa)$. If the above two conditions do not hold, then $\G$ has no
clock that is reset and bounded from below (i.e. $x\geq 1$) and $\G$ either has
blocking clocks or zero-checks. If it has only blocking clocks, we apply the
procedure that restarts the exploration with blocking edges removed,
as described in Section~\ref{sec:algo:gzg}. If $\G$ has zero-checks,
we indeed use the guessing zone graph construction, however
\emph{restricted only to the nodes of $\G$}. The problem is to know
the initial set of clocks that need to be zero. We first define a few
notations.

Let $(q^\G, Z^\G)$ be the root of $\G$ as determined by Couvreur's
algorithm. Let $\GZG_{|\G}(\Aa)$ be the part of $\GZG(\Aa)$ rooted at
$(q^\G, Z^\G, X)$ and restricted only to the nodes and transitions
that occur in $\G$. We say that a run $\rho$ of $\Aa$ is
\emph{trapped} in an SCC $\G$ of $\ZG(\Aa)$ if a suffix of $\rho$ is
an instantiation of a path in $\G$.  The following lemma justifies the
use of the restricted guessing zone graph construction starting from
$(q^\G, Z^\G, X)$.

\begin{lemma}
  The automaton $\Aa$ has an accepting non-Zeno run trapped in an SCC
  $\G$ of $\ZG(\Aa)$ iff $\GZG_{|\G}$ has an SCC that is accepting,
  unblocked and contains a clear node.
\end{lemma}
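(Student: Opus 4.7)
The plan is to deduce both implications from Lemmas~\ref{lem:TBA_to_zero_guessing} and~\ref{lem:zero_guessing_to_TBA}, lifting their arguments from $\GZG(\Aa)$ to the restricted graph $\GZG_{|\G}$.

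For the $(\Leftarrow)$ direction, suppose $\GZG_{|\G}$ contains an SCC $\G'$ that is accepting, contains a clear node, and satisfies $U_{\G'} \subseteq R_{\G'}$. I would build an infinite path inside $\G'$ traversing every edge infinitely often; by $U_{\G'} \subseteq R_{\G'}$ every clock bounded on this path is also reset infinitely often, so the path is unblocked and meets accepting and clear nodes infinitely often. Prepending a finite path from the initial node $(q_0, Z_0, X)$ of $\GZG(\Aa)$ to some node of $\G'$—which exists because $(q^\G, Z^\G)$ is reachable from $(q_0, Z_0)$ in $\ZG(\Aa)$ and any $\ZG$-path lifts to $\GZG$ by keeping $Y \equiv X$—Lemma~\ref{lem:zero_guessing_to_TBA} then produces a non-Zeno accepting run of $\Aa$. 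Its tail instantiates the $\ZG$-projection of the cycle inside $\G'$, which stays in $\G$, so the run is trapped in $\G$.

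For the $(\Rightarrow)$ direction, given a trapped non-Zeno accepting run $\rho$, Lemma~\ref{lem:TBA_to_zero_guessing} produces an unblocked path $\s'$ in $\GZG(\Aa)$ visiting accepting and clear nodes infinitely often; since a suffix of $\rho$ instantiates a path in $\G$, the $\ZG$-projection of $\s'$ eventually lies inside $\G$. Let $V$ be the set of nodes of $\s'$ visited infinitely often; together with the edges of $\s'$ used infinitely often, $V$ is a strongly connected subgraph of $\GZG(\Aa)$ whose nodes all project to $\G$, that contains accepting and clear nodes, and that satisfies $U_V \subseteq R_V$ (a clock bounded by an edge in $V$ is bounded infinitely often on $\s'$, hence by unblockedness reset infinitely often, and therefore reset by some edge in $V$).

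The hard part is to show that $V$ lives inside $\GZG_{|\G}$, i.e., that each of its nodes is reachable from $(q^\G, Z^\G, X)$ via transitions whose $\ZG$-projections lie in $\G$. I would use that by the construction of Lemma~\ref{lem:TBA_to_zero_guessing} each clear node of $\s'$ arises after a $\tau$-reset, so in particular $\s'$ has infinitely many clear nodes $(q', Z', \es)$ in its tail with $(q', Z') \in \G$. For any target $v \in V$ visited after such a position, the route from $(q^\G, Z^\G, X)$ to $v$ composed of (i) the lift of a $\G$-path from $(q^\G, Z^\G)$ to $(q', Z')$ with $Y \equiv X$, (ii) a $\tau$-reset to $(q', Z', \es)$, and (iii) the replay of the segment of $\s'$ from $(q', Z', \es)$ to $v$ stays entirely in $\GZG_{|\G}$. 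Hence $V$ embeds into $\GZG_{|\G}$; the SCC of $\GZG_{|\G}$ containing $V$ inherits accepting and clear nodes, and should it carry extra blocking edges the Lemma at the start of Section~\ref{sec:algo:gzg}, applied to the unblocked witness path provided by $V$, extracts a sub-SCC with $U \subseteq R$ together with both inhabitation properties.
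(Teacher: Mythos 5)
Your proof is correct, and it rests on the same two pillars as the paper's argument, namely Lemmas~\ref{lem:TBA_to_zero_guessing} and~\ref{lem:zero_guessing_to_TBA}; the right-to-left direction coincides with the paper's almost verbatim. Where you genuinely diverge is in how the left-to-right direction certifies membership in $\GZG_{|\G}$. The paper does not use Lemma~\ref{lem:TBA_to_zero_guessing} as a black box: it re-runs that construction on the trapped run but keeps $Y'_i=Y_i$ (never guessing $\es$) up to the position $m$ at which the run enters $\G$ at its root, so that $Y_m=X$ and the suffix of the constructed path starts exactly at $(q^\G,Z^\G,X)$ — hence lies in $\GZG_{|\G}$ by construction, with no further reachability argument needed. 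You instead take the path $\s'$ produced by the lemma as is, whose tail nodes carry $Y$-components accumulated from $\tau$-resets possibly predating the entry into $\G$, and you must then separately show that the infinitely visited set $V$ sits inside $\GZG_{|\G}$; your rerouting (lift a $\G$-path with $Y\equiv X$ from the root, $\tau$-reset to a clear node $(q',Z',\es)$ of the tail, replay the deterministic segment of $\s'$ to the target) does this correctly, since nodes of $\GZG$ are just triples and the replayed segment's projections stay in $\G$. The paper's variant is shorter because it sidesteps the reachability patch entirely; yours buys modularity (the lemma is reused unmodified) and makes explicit a reachability fact that the paper leaves implicit in the phrase ``the suffix is a path of $\GZG_{|\G}$''. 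Both proofs share the same tacit assumption that the canonical zone-graph abstraction of the trapped run eventually agrees with a path of $\G$ through its root, so no additional gap is introduced on your side.
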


\begin{proof} For the left-right direction, consider the following run
  $\rho$ of $\Aa$ trapped in $\G$:
  \begin{equation*}
    (q_0, \nu_0) \xra{\d_0, t_0} \dots (q_m, \nu_m)
    \xra{\d_m, t_m} \dots
  \end{equation*}
  where $q_m = q^\G$, $\nu_m \in Z^\G$ and $(q^\G, Z^\G)$ is the root
  of $\G$.  Consider the sequence $\s'$:
  \begin{equation*}
    (q_0,Z_0,Y_0)\xrightarrow{\tau}
    (q_0,Z_0,Y_0')\xrightarrow{t_0}
    (q_1,Z_1,Y_1)\xrightarrow{\tau}
    (q_1,Z_1,Y_1')\xrightarrow{t_1}
    \cdots
  \end{equation*}
  
  where
  \begin{itemize}
  \item $(q_0, Z_0)$ is the initial node of $\ZG(\Aa)$, the zone $Z_i$
    is determined by the transition $t_{i-1}$,
  \item $Y_0=X$, $Y_i$ is determined by the transition,
  \item $Y'_i=Y_i$ for all $i \le m$; for $i > m$, $Y_i' = \es$ if
    $\d_i > 0$ and $Y_i' = Y_i$ otherwise.
  \end{itemize}

  Observe that $Y_m = X$ and the suffix of $\s'$ starting from $(q_m,
  Z_m,Y_m)$ is a path of $\GZG_{|\G}(\Aa)$. Since there are infinitely
  many $i$ with $\d_i >0$, this suffix corresponds to an SCC that has
  a clear node. It is accepting and unblocked since the run $\rho$
  that we started with is accepting and non-Zeno.

  For the right-left direction, note that an accepting, unblocked SCC
  with a clear node in $\GZG_{|\G}(\Aa)$ corresponds to an accepting,
  unblocked path of $\GZG(\Aa)$ starting from $(q^\G, Z^\G, X)$ that
  visits a clear node infinitely often. It is straightforward to see
  that $(q^\G, Z^\G, X)$ is reachable from the initial node $(q_0,
  Z_0, X)$ of $\GZG(\Aa)$ through a path in which for all transitions
  $(q, Z, Y) \xra{\tau} (q', Z', Y')$, $Y' = Y$. Indeed, the
  restriction of $\GZG(\Aa)$ to its nodes with $Y=X$ is isomorphic to
  the zone graph $\ZG(\Aa)$. From this path of $\GZG(\Aa)$ and using
  Lemma \ref{lem:zero_guessing_to_TBA}, we can construct a accepting,
  non-Zeno run of $\Aa$ that is trapped in $\G$.  \qed
\end{proof}

Based on the above observations, we give the schema of the overall
optimized algorithm in Figure \ref{fig:fullalgo}. In the worst case,
the algorithm runs in time $\Oo(|\ZG(\Aa)|\cdot|X|^2)$. When the automaton
does not have zero-checks it runs in time
$\Oo(|\ZG(\Aa)|\cdot |X|)$. When the automaton further has no blocking
clocks, it runs in time $\Oo(|\ZG(\Aa)|)$.

\begin{figure}[t!]
  \centering
  \begin{tikzpicture}
    \draw (5,14) node {$\Aa$} (6,14); \draw[->] (5, 13.8) -- (5, 13);
    \draw (2, 12) rectangle (8, 13);
    \draw (5, 12.5) node {$\begin{array}{c}  \text{\tt{Compute}} ~ \ZG(\Aa) \\
        \texttt{using Couvreur's algorithm}
      \end{array}$} (7,12.5);
    \draw (1.5, 12.5) node {$\star$} (2,13);
    \draw[->] (8,12.5) -- node[above] {\tt{Finish}} (9.5,12.5) ;
    \draw (11, 12.5) node {$\Aa$ \tt{is empty}} (13,12.5);
    \draw[->] (5,12) -- (5, 10);
    \draw (7, 11) node {$\begin{array}{c} \texttt{Found SCC $\G$} \\  
        \texttt{with accepting node}\end{array}$} (9,11);
    \draw (2, 8) rectangle (8,10);
    \draw (5, 9.5) node {$\G$ \tt{is strongly non-Zeno?}} (8,9.5);
    \draw (5, 9) node {\tt{or is} $\G$} (6,9);
    \draw (5, 8.5) node {\tt{unblocked,free from
        zero-checks?}} (8,9.5);
    \draw[->] (8,9) -- node[above] {\tt{Yes}} (9.5,9);
    \draw (11, 9) node {$\Aa$ \tt{is non-empty}} (13,9);
    \draw[->] (5,8) -- node[right] {\tt{No}} (5, 7);
    \draw (5,6.5) node {\texttt{Is} $\G$ \texttt{maximal?}};
    \draw (3,7) rectangle (7,6);
    \draw[->] (7,6.5) -- node[above] {\tt{No}} (9.5,6.5);
    \draw (11, 6.5) node {\texttt{Continue} $\star$} (13,6.5);
    \draw[->] (5,6) -- node[right] {\tt{Yes}} (5, 5);
    \draw (5,4.5) node {$\G$ \texttt{has zero-checks?}};
    \draw (3,5) rectangle (7,4);
    \draw[->] (3,4.5) -- (2,4.5) -- node[left] {\texttt{No}} (2,3);
    \draw[->] (7,4.5) -- (8,4.5) -- node[left] {\texttt{Yes}} (8,3);
    \draw (0,1) rectangle (4,3); 
    \draw (2,2) node {$\begin{array}{c}\texttt{Is there a sub-SCC} \\
        \texttt{with accepting node \&} \\
        \texttt{no blocking clocks?} \end{array}$} (4,2);
    \draw (6,1) rectangle (10,3); 
    \draw (8,2) node {$\begin{array}{c}\GZG_{|\G}(\Aa)
        \texttt{ has SCC}  \\
        \texttt{with accepting node, } \\
        \texttt{clear node \& } \\
        \texttt{no blocking clocks?} \end{array}$} (10,2);
    \draw[->] (1, 1) -- node[left] {\tt{No}} (1,0) node[below]
    {\tt{Continue} $\star$};
    \draw[->] (9, 1) -- node[right] {\tt{No}} (9,0) node[below]
    {\tt{Continue} $\star$};
    \draw (3,1) -- node[right] {\tt{Yes}} (3,0);
    \draw (7,1) -- node[left] {\tt{Yes}} (7,0);
    \draw (3,0) -- (7,0);
    \draw[->] (5,0) -- (5,-1) node[below] {$\Aa$ \tt{is non-empty}} ;
  \end{tikzpicture}
  \caption{Algorithm to check for B\"{u}chi emptiness of
    $\Aa$. ``Continue'' loops back to computing $\ZG(\Aa)$ using
    Couvreur's Algorithm.}
  \label{fig:fullalgo}
\end{figure}


%
\section{Experiments}
\label{sec:experiments-conclusion}

We have implemented our algorithms in a prototype verification
tool. Given a \emph{network} $\Aa_1, \dots, \Aa_n$ of timed B\"{u}chi
automata, we want to check if this network satisfies a property $\phi$
specified in some logic. We consider a property $\phi$ that can be
translated into a timed automaton $\Aa_{\neg\phi}$ such that the
network satisfies $\phi$ iff the product timed automaton $\Aa_1 \times
\dots \Aa_n \times \Aa_{\neg\phi}$ has an empty
language. Table~\ref{tbl:experiments} presents the results that we
obtained on several classical examples. The ``Models'' column
represents the product of the network Timed Büchi Automata and the
property to verify. We give the number of processes in the network for
each model. A tick in the ``Sat.'' columns tells that the property is
satisfied by the model. The ``Zone Graph'' column gives the number of
nodes in the zone graph. Next, for the ``Strongly non-Zeno''
construction, we give the size of the resulting zone graph followed by
the number of nodes that are visited during verification using the
Couvreur's algorithm. Similarly for the ``Guessing Zone Graph'' but
using the algorithm in section~\ref{sec:algo:gzg}. Finally, the last
column corresponds to our fully optimized algorithm as described in
section~\ref{sec:algo:optimized}.

We have considered three types of properties: reachability properties
(mutual exclusion, collision detection for CSMA/CD), liveness
properties (access to the resource infinitely often), and bounded
response properties (which are reachability properties with real-time
requirements). Reachability properties require to find a path
to a target state starting from the initial state. Although this
path is a finite sequence, it is realistic only if this finite
sequence can be extended to a 
non-Zeno path of the automaton. Therefore, while verifying
reachability properties, we check if the automaton has a non-Zeno path
that contains the target state. 

The strongly non-Zeno construction outperforms the guessing zone graph
construction for reachability properties. This is particularly the
case for mutual exclusion on the Fischer's protocol and collision
detection for the CSMA/CD protocol. For liveness properties, the
results are more balanced. On the one hand, the strongly non-Zeno
construction is once again more efficient for the CSMA/CD protocol. On
the other hand the differences are tight in the case of Fischer
protocol. The guessing zone graph construction distinguishes itself
for bounded response properties. Indeed, the Train-Gate model is an
example of exponential blowup for the strongly non-Zeno construction.

We notice that on-the-fly algorithms perform well. Even when the
graphs are big, particularly in case when automata are not empty, the
algorithms are able to conclude after having explored only a small
part of the graph. Our optimized algorithm outperforms the two others
on most examples. Particularly, for the CSMA/CD protocol with 5
stations our algorithm needs to visit only 4841 nodes while the two
other methods visited 8437 and 21038 nodes. This confirms our initial
hypothesis: most of the time, the zone graph contains enough
information to ensure time progress. As a consequence, checking
non-Zenoness and emptiness is done at the same cost as checking
emptiness only. This is in turn achieved at a cost that is similar to
reachability checking.

Our optimization using lower bounds on clocks also proves useful for
the FDDI protocol example. One of its processes has zero-checks, but
since some other clock is bounded from below and reset, it was not
necessary to explore the guessing zone graph to conclude
non-emptiness.

\begin{table}
{
  \scriptsize
  \begin{center}
    \begin{tabular}{|l|c|r||r|r||r|r||r|}
      \hline
      \multirow{2}{*}{Models ($\Aa$)} &
      \multirow{2}{*}{Sat.} &
      $\ZG(\Aa)$ &
      \multicolumn{2}{|c||}{$\ZG(SNZ(\Aa))$} &
      \multicolumn{2}{|c||}{$\GZG(\Aa)$} &
      Optimized\\
      \cline{3-8}
      &
      &
      \multicolumn{1}{|c||}{size} &
      \multicolumn{1}{|c|}{size} & \multicolumn{1}{|c||}{visited} &
      \multicolumn{1}{|c|}{size} & \multicolumn{1}{|c||}{visited} &
      \multicolumn{1}{|c|}{visited}\\
      \hline
      Train-Gate2 (mutex) & $\surd$ & 134 & 194 & 194 & 400 & 400 &
      134\\
      Train-Gate2 (bound. resp.) & & 988 & 227482 & 352 & 3840 & 1137
      & 292\\
      Train-Gate2 (liveness) & & 100 & 217 & 35 & 298 & 53 & 33\\
      \hline
      Fischer3 (mutex) & $\surd$ & 1837 & 3859 & 3859 & 7292 & 7292 &
      1837\\
      Fischer4 (mutex) & $\surd$ & 46129 & 96913 & 96913 & 229058 &
      229058 & 46129\\
      Fischer3 (liveness) & & 1315 & 4962 & 52 & 5222 & 64 & 40\\
      Fischer4 (liveness) & & 33577 & 147167 & 223 & 166778 & 331 &
      207\\
      \hline
      FDDI3 (liveness) & & 508 & 1305 & 44 & 3654 & 79 & 42\\
      FDDI5 (liveness) & & 6006 & 15030 & 90 & 67819 & 169 & 88\\
      FDDI3 (bound. resp.) & & 6252 & 41746 & 59 & 52242 & 114 & 60\\
      \hline
      CSMA/CD4 (collision) & $\surd$ & 4253 & 7588 & 7588 & 20146 &
      20146 & 4253\\
      CSMA/CD5 (collision) & $\surd$ & 45527 & 80776 & 80776 & 260026
      & 260026 & 45527\\
      CSMA/CD4 (liveness) & & 3038 & 9576 & 1480 & 14388 & 3075 &
      832\\
      CSMA/CD5 (liveness) & & 32751 & 120166 & 8437 & 186744 & 21038 &
      4841\\
      \hline
    \end{tabular}
  \end{center}
}
\caption{Experimental Results. The ``Sat.'' column tells which
  properties are satisfied by the model. The ``size'' columns give the
  number of nodes in the corresponding graphs. The ``visited'' columns
  give the number of nodes that are visited by the corresponding
  algorithm. The results correspond to the Couvreur's algorithm for
  $\ZG(SNZ(\Aa))$, the algorithm in Section~\ref{sec:algo:gzg} for
  $\GZG(\Aa)$ and the algorithm in Section~\ref{sec:algo:optimized}
  for the ``Optimized'' column.}
\label{tbl:experiments}
\end{table}

\section{Conclusions}

The Büchi non-emptiness problem is one of the standard problems for
timed automata. Since the paper introducing the model, it has been
widely accepted that the addition of one auxiliary clock is an
adequate method to deal with the problem of Zeno paths. This technique
is also used in the recently proposed zone based algorithm for the
problem~\cite{T:TOCL:2009}.

In this paper, we have shown that in some cases the auxiliary clock
may cause exponential blowup in the size of the zone graph. We have
proposed another method that is based on a modification of the zone
graph. The resulting graph grows only by a factor that is linear in
the number of clocks. In our opinion, the efficiency gains of our
method outweigh the fact that it requires some small modifications in
the code dealing with zone graph exploration. Moreover, liveness can
be checked at the same cost as reachability as demonstrated by our
experiments. This also shows that in most cases the zone graph already
contains enough information to handle non-Zenoness.

As future work we plan to extend our algorithm to commonly used
syntactic extensions of timed automata. For example, UPPAAL and Kronos
allow reset of clocks to arbitrary values, which is convenient for
modeling real life systems. This would require to extend the guessing
zone graph construction and consequently our algorithm. In this paper,
we considered the $\Approx$ abstraction that has been largely improved
by later works~\cite{Behrmann:STTT:2006}. It has been shown that these
new abstractions preserve Büchi conditions~\cite{Li:FORMATS:2009}. We
plan to study the extension of our technique to these
abstractions. Finally, we also plan to extend our construction to
extract non-Zeno strategies in timed games.


\bibliographystyle{plain}
\bibliography{m}

\end{document}